\documentclass[lettersize,journal]{IEEEtran}
\usepackage{amsmath,amsfonts}
\usepackage{algorithmic}
\usepackage{algorithm,algorithmic}
\usepackage{array}
\usepackage[caption=false,font=normalsize,labelfont=sf,textfont=sf]{subfig}
\usepackage{textcomp}
\usepackage{stfloats}
\usepackage{url}
\usepackage{verbatim}
\usepackage{graphicx}
\usepackage{cite}
\usepackage{color}
\usepackage{amsthm}
\usepackage{textcomp}
\newtheorem{assumption}{Assumption}
\newtheorem{remark}{Remark}
\newtheorem{proposition}{Proposition}
\newtheorem{lemma}{Lemma}
\newtheorem{theorem}{Theorem}
\hyphenation{op-tical net-works semi-conduc-tor IEEE-Xplore}

\begin{document}

\title{Consensus Tracking Control of Multi-agent Systems with A Time-varying Reference State under  Binary-valued Communication}
       
\author{Ting Wang,~\IEEEmembership{Member,~IEEE,} Zhuangzhuang Qiu, Xiaodong Lu,
Yanlong Zhao,~\IEEEmembership{Senior Member,~IEEE}

\thanks{This paper  was supported by National Natural Science Foundation of China under Grants 62025306 and 62473040.}
\thanks{Ting Wang and Zhuangzhuang Qiu are with the School of Intelligence Science and Technology, University of Science and Technology Beijing, and also
	with the Institute of Artificial Intelligence, University of Science and Technology Beijing, Beijing, 100083,
	China (e-mails: wangting@ustb.edu.cn, m15201340427@163.com).}
\thanks{Xiaodong Lu is with the School of Automation and Electrical Engineering, University of Science and Technology Beijing, Beijing, 100083,
	China (e-mails:luxiaodong@amss.ac.cn).}
\thanks{Yanlong Zhao is with the LSC, NCMIS, Academy of Mathematics, and Systems
Science, Chinese Academy of Sciences, Beijing, 100190, China, and also
with the School of Mathematical Sciences, University of Chinese Academy
of Sciences, Beijing, 100049, China (e-mail: ylzhao@amss.ac.cn).}
}



\maketitle

\begin{abstract}
This paper investigates the problem of consensus tracking control of discrete-time multi-agent systems under binary-valued communication. Different from most existing studies on consensus tracking, the transmitted information between agents is the binary-valued. Parameter identification with binary-valued observations is applied to the estimation of neighbors' states and the tracking control is designed based on the estimation. 
Two Lyapunov functions  are constructed to deal with the strong coupling of estimation and control. 
Compared with consensus problems under binary-valued communication, a reference state is required for consensus tracking control. 
Two scenarios  of the time-varying reference state are studied respectively.  (1) The reference state is asymptotically convergent. An online algorithm that performs estimation and control simultaneously is proposed, in which the estimation step size and the control gain are decreasing with time. By this algorithm, the multi-agent system is proved to achieve consensus tracking with convergence rate $O(1/k^\varepsilon)$ under certain conditions. (2) The reference sate is  bounded, which is less conservative than that in the first case. In this case, the esimation step size and control gain are designed to be constant.  By this algorithm,  all the followers can reach to a  neighborhood of the leader with an exponential rate.  Finally, simulations are given to demonstrate theoretical results. 


\end{abstract}

\begin{IEEEkeywords}
Consensus tracking, time-varying reference state, multi-agent systems, estimation, binary-valued communication.
\end{IEEEkeywords}

\section{Introduction}
\IEEEPARstart{T}{he} multi-agent system (MAS) is composed of interacting individuals or agents, each of which is a physical or virtual entity. MAS can complete some complex tasks by communication, cooperation or compete among the agents.  It is widely used in real life including cooperative control of unmanned air vehicles, robot arm coordination, cooperative sensor measurement and smart micro-grids {\cite{ref1,ref2,ref3,ref4}}. 

One task of multi-agent systems is to design an appropriate controller to achieve consensus tracking. In \cite{ref5, ref6}, some basic methods are proposed to solve the consensus tracking control problem of multi-agent systems. The sufficient and necessary conditions of convergence have been widely used in the hypothesis of network topology mechanisms. In \cite{ref7, ref8, ref9}, a cooperative control framework and tracking control algorithm for general linear systems with a time-invariant leader are presented. It is proved that the consensus tracking can be achieved by containing spanning trees in a directed graph. Refs. \cite{ref10, ref11} studied the consensus tracking problem of linear multi-agent systems with  switching  directed topology and communication delay. The linear matrix inequality and Lyapunov stability theory are used to prove that the tracking algorithm can achieve exponential convergence and ensure the boundedness of tracking errors. 

The above articles considered that each agent can obtain the precise states of neighbors through the signal flow in the topology. However, the information available to us is often imprecise or quantized due to the limitation of communication bandwidth. Many scholars have proposed solutions to consensus  tracking problems of MAS with quantized communication.  Integer quantization is considered in\cite{ref12}, Akshay Kashyap et al. proposed consensus tracking algorithms and gave bounds on the consensus tracking time of these algorithms with fully linear connected networks. For infinite level quantization, Ruggero Carli et al.\cite{ref13} analyzed the feasibility of the  consensus tracking algorithm and estimated the probability of the error according to the eigenvalues of the evolution matrix describing the algorithm. The study of finite level quantization started with Refs \cite{ref14, ref15, ref16}. In \cite{ref14}, Francesca Ceragioli et al. studied asymptotic and finite-time convergence conditions for first-order MAS. In \cite{ref15}, Zhihong Guan et al. derived some conditions for quantized consistency of second-order multi-agent systems based on stability theory. In \cite{ref16}, Weisheng Chen extended the conclusion of convergence to the necessary and sufficient conditions for sampling period and design parameters. Junjie Fu\cite{ref17} gave the comprehensive influence of three-level quantized correlation information error and quantized absolute information error on tracking results.  

As far as we know,  consensus tracking  of multi-agent systems based on binary-valued information is rarely studied, although there are some results on consensus control based on binary-valued information \cite{ref18,refGeorge,Ke2023}. Tracking a constant reference state  (state of leader) under binary-valued communication can be transformed as a consensus problem under binary-valued communication \cite{refCAC}. However, tracking a time-varying objective is more difficult than that with a constant objective, especially by using limited information.  

This paper studies consensus tracking problem of multi-agent systems with a time-varying reference state under binary-valued communication. Two kinds of reference states are considered. One is asymptotically convergent, the other is bounded. 
The contributions of this paper are given as follows.
\begin{enumerate}
\item Consensus tracking of multi-agent systems with a time-varying reference state is first studied  under binary-valued communication. Different from consensus tracking under general quantized communication \cite{ref14, ref15, ref16, ref17}, binary-valued communication needs lower bandwidth  and provides less information. It is more difficult to achieve consensus tracking by using less information. An identification method with binary-valued observations is applied to the estimation of agents' states and the tracking control is designed based on the estimation. Since the estimate error and tracking error are strongly coupled, two Lyapunov functions are constructed to give the convergence and convergence rate of the consensus tracking algorithms.
 \item Different change rates of the time-varying reference state lead to different consensus tracking errors between leader and followers, which affect estimation errors and bring difficulties in analysis of algorithms. For the case of the convergent reference state, an algorithm with a decaying step size of estimate and a decaying control gain are proposed. By this algorithm, it is proved that all the followers' states can converge to the reference state under certain conditions. And, the convergence rate can be given by the change rate of reference state. The slower the change rate of reference state, the faster the convergence rate of the algorithm. 
Moreover, the algorithm can degenerate into that in \cite{refCAC} for a constant reference state and the corresponding results are consistent with those in \cite{refCAC}.
\item  For the case of the  bounded reference state, decaying gains cannot ensure the convergence of tracking error. The step sizes and control gains are designed to be constant. By choosing appropriate values of step sizes and control gains, the estimate error and tracking error can be compressed. It is proved that all the followers' states can reach to a  neighborhood of the reference state with an exponential rate and the range of the neighborhood depends on the change rate  of reference state.
	
\end{enumerate}

The rest of this paper is organized as follows. Section \ref{sec:prfor} formulates the consensus tracking problem of multi-agent systems under binary-valued communication. Section \ref{sec:cov} considers  case of asymptotically convergent reference state. An algorithm and its properties are given.   Section \ref{sec:bod} considers the case of  bounded reference state. In Section \ref{sec:sim}, numerical simulations are given to illustrate the effectiveness of  theoretical algorithms. Section \ref{sec:con} is the conclusion of this paper.

\section{Problem Formulation}\label{sec:prfor}
Consider a discrete time MAS consisting of $n$ agents 
\begin{align}
	\label{eq:agent}
	x_i(k+1)=x_i(k)+u_i(k), \; i=1,\ldots, n,
\end{align}
where $x_i(k)\in R$ and $u_i(k)\in R$ are respectively the state and the control input
of $i$th agent at time $k$. The $n$ agents are called followers. There is a leader  labeled agent $n+1$. It has the information state $x_{n+1}(k)$, which is called the reference state. The reference state is time-varying, which satisfies
\begin{align}
	\label{eq:leader}
    x_{n+1}(k+1)=x_{n+1}(k)+f(k),	
\end{align}
where $f(k)$ is the change rate of the leader's state. 

The MAS consisting of a leader and $n$ followers is studied. 
Combining equation \eqref{eq:agent} and  \eqref{eq:leader}, the system can be written as  
\begin{align}
	x_i(k+1)=x_i(k)+u_i(k),\; i=1,\ldots, n+1,
\end{align}
where $u_{n+1}(k)=f(k)$, $k=1,2,\ldots$. Its vector form is as follows:
\begin{align}
	x(k+1)=x(k)+U(k),
\end{align}
with $x(k)=\left[x_1(k), \ldots, x_{n+1}(k)\right]^T$ and $U(k)=\left[u_1(k), \ldots,u_{n+1}(k)\right]^T$.

Multi-agent systems are distributed by network topology $\mathcal{G} = (\mathcal{N} , \mathcal{E} )$, where $\mathcal{N} = \{1, 2, . . . , n+1\}$ is the node set, i.e. the set of agents, and $\mathcal{E} = N\times N$ is the edge set, i.e. the set of signal transmissions. Matrix $ \mathcal{A}=(a_{ij})_{(n+1)\times (n+1)}$ is the Adjacency Matrix of topology $\mathcal{G}$. The element $a_{ij}=1$ if there is an edge from $j$ to $i$, else $a_{ij}=0$. If $a_{ij}=1$, then agent $j$ is called the neighbor of agent $i$, denoted as $j\in\mathcal{N}_i$, which means agent $i$ can get the information from agent $j$. Since the leader cannot receive feedback from any followers,  $a_{(n+1)j}=0, \; j=1,2,...,n$. Denote $d_i$ as the number of agent $i$’s neighbors, $\mathcal{D}$ = diag$(d_1, d_2, . . . , d_{n+1})$. 
The Laplacian of directed graph is defined as $\mathcal{L} = \mathcal{D} - \mathcal{A}$.

The signal flow from agent $j$ to agent $i$ in the topology is affected by random noise. Each agent $i$ $(i=1, ..., n+1)$ can only get the binary-valued information from its neighbors $j$ as follows.
\begin{align}
	\begin{cases}
		y_{ij}(k)=x_j(k)+\xi_{ij}(k), \quad j=1,2,\ldots, n+1,\\
		\mathcal{S}_{ij}(k)=I_{\lbrace y_{ij}(k)\leq \mathcal{B}_{ij}\rbrace }, \quad j\in \mathcal{N}_i,
	\end{cases}
	\label{eq:observation}	  
\end{align}
where $x_j(k)$ is the state of agent $j$, $\xi_{ij}(k)$ is the external noise, subject to the assumed distribution, $y_{ij}(k)$ is the immeasurable output, $\mathcal{B}_{ij}$ is the threshold of a binary-valued sensor, $\mathcal{S}_{ij}(k)$ is the binary-valued information that agent $i$ obtains from agent $j$,  $I_{\lbrace . \rbrace }$ is an indicative function, defined as 

\begin{align} 
	I_{\lbrace v \in V\rbrace }={\begin{cases}1,\quad {\text{if}}\; v \in V, \\ 0, \quad {\text{others}}. \end{cases}}
\end{align}
The objective of this work is to design distributed controllers $u_i(k), i=1, \ldots, n$ such that the followers' states can track the leader's state under binary-valued communication in the following sense.
\begin{align}
	\lim\limits_{k\to\infty}\left\Vert x_i(k)-x_{n+1}(k)\right\Vert= 0,\quad i=1,...,n, 
\end{align} 
where $\|x\|=E(x^Tx)$, for any random  variable $x$.

The topology and noise of the system need to satisfy the following assumptions.

\begin{assumption}\label{ass1}
	The topology contains a directed spanning tree with the leader as the root node.
\end{assumption}

\begin{remark}
	Let $\mathcal{L}$ be the Laplacian matrix of a directed graph $\mathcal{G}$, we can get that $\mathcal{L}$ has the following properties. 
	\begin{enumerate}
		\item The eigenvalues of the matrix $\mathcal{L}$  all have nonnegative real parts.
		\item The matrix $\mathcal{L}$ has a unique eigenvalue of $0$, and its corresponding eigenvector is $1_{n+1}$. Here $1_{n+1}$ is a $n+1$-dimensional vector whose entries are all $1$.
	\end{enumerate}
\end{remark}
\begin{remark}\label{rem2}
	If the network topology contains a directed spanning tree, there exists a matrix $\psi_{(n+1)\times n} $ that satisfies the following conditions,
	\begin{equation*}
		rank\{\psi_{(n+1)\times n}\} =rank\{\mathcal{L}\},
	\end{equation*}
	then we can define an orthogonal matrix $\Theta=[q,\psi_{(n+1)\times n }]$, $\Theta^{-1}=\begin{pmatrix}
		\tau \\ 
		\phi_{n\times (n+1)}
	\end{pmatrix}$, and $\Theta^{-1}\mathcal{L}\Theta=\begin{pmatrix}
		0& 0_{1\times n}\\
		0_{n\times 1}& \tilde{\mathcal{L}}
	\end{pmatrix}$, where the matrix $\tilde{\mathcal{L}}$ is a Joldan matrix, and all the eigenvalues of $\tilde{\mathcal{L}}$ are non-zero eigenvalues of $\mathcal{L}$ with positive real parts.
\end{remark}

\begin{assumption}\label{ass2}
	The noises $\{\xi_{ij} (k), i \in N_j\}$  are independent of the state of the system and follows a normal distribution with mean $0$.
	The corresponding distribution function and probability density function are respectively $\mathcal{F}(\cdot)$ and $f(\cdot)$, where $f(x) = d\mathcal{F}(x)/dx \neq  0$.
\end{assumption}

\begin{remark}
	We assume that the noise distribution in Assumption \ref{ass2} is known; if not, the noise distribution can be estimated according to the method in reference \cite{ref19} .
\end{remark}



\section{The reference state is asymptotically convergent}\label{sec:cov}
In this section, we assume the sate of leader is asymptotically convergent, which can be described as the following assumption. 
\begin{assumption}\label{ass:cov}
There exists a constant $x^*$ such that
$$\lim_{k\rightarrow\infty}x_{n+1}(k)=x^*.$$
\end{assumption}

By Assumption \ref{ass:cov}, we can get the following lemma.
\begin{lemma}\label{lem:f}
	The  change rate of reference state satisfies that $\sum_{i=1  }^{\infty} f(i)$ is asymptotically convergent.
\end{lemma}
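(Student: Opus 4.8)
The plan is to exploit the telescoping structure of the leader's dynamics in \eqref{eq:leader}. First, I would rewrite the recurrence $x_{n+1}(k+1) = x_{n+1}(k) + f(k)$ in the equivalent form $f(k) = x_{n+1}(k+1) - x_{n+1}(k)$, so that each term of the series is expressed as a consecutive difference of the reference state.

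Next, I would form the partial sum and substitute this difference to collapse it:
\begin{align*}
	\sum_{k=1}^{N} f(k) = \sum_{k=1}^{N} \bigl[ x_{n+1}(k+1) - x_{n+1}(k) \bigr] = x_{n+1}(N+1) - x_{n+1}(1),
\end{align*}
where all intermediate terms cancel pairwise, leaving only the two boundary values.

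Finally, I would let $N \to \infty$ and invoke Assumption \ref{ass:cov}, which guarantees $\lim_{N\to\infty} x_{n+1}(N+1) = x^*$. Since $x_{n+1}(1)$ is a fixed initial value, the sequence of partial sums converges to $x^* - x_{n+1}(1)$, which is precisely the assertion that $\sum_{i=1}^{\infty} f(i)$ is convergent.

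No genuine obstacle arises here; the result is immediate once the telescoping observation is made. The only point worth emphasizing is that the ``asymptotic convergence'' of the series is to be read as convergence of the sequence of partial sums, and this follows directly from the assumed convergence of the state sequence --- no separate absolute-convergence or summability estimate on $f$ itself is required, nor is any structure of the topology or noise invoked.
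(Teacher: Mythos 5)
Your proof is correct and follows exactly the same route as the paper's: expressing $f(k)$ as the consecutive difference $x_{n+1}(k+1)-x_{n+1}(k)$, telescoping the partial sum to $x_{n+1}(N+1)-x_{n+1}(1)$, and invoking Assumption \ref{ass:cov} to pass to the limit $x^*-x_{n+1}(1)$. No discrepancies to note.
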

\begin{proof}
	By \eqref{eq:leader}, we have
	\begin{align*}
	\sum_{i=1 }^{k} f(i)
	= & f(k)+f(k-1)+\ldots+f(1)   \\
	=&x_{n+1}(k+1)-x_{n+1}(k)+x_{n+1}(k)-x_{n+1}(k-1)\\
	&+...+x_{n+1}(2)-x_{n+1}(1)  \\
	=& x_{n+1}(k+1)-x_{n+1}(1).  
	\end{align*}
	By Assumption \ref{ass:cov}, we have
	$$\lim_{k\rightarrow\infty}\sum_{i=1 }^{k} f(i)=\lim_{k\rightarrow\infty}x_{n+1}(k+1)-x_{n+1}(1)=x^*-x_{n+1}(1),$$
	which means $\sum_{i=1}^{\infty} f(i)$ is asymptotically convergent.
	
\end{proof}


In this section, the recursive projection algorithm (RPA) proposed in \cite{ref20} is used to estimate the states of the neighbors, where the step size is decreasing. In addition, the controller is designed with a decaying gain to reduce the impact of noise. The consensus tracking algorithm of a convergent reference state (CRS) are given as the following five steps:

	\begin{enumerate}
	\item Start: Initializes the state of each agent $i$ and estimates of its neighbors in the following form:
	\begin{equation*} 
	x_i(0)=x_i^0,\quad \hat{x}_{ij}(0)=\hat{x}_{ij}^0, 
	\end{equation*}
	where $i=1,2,\ldots, n, j\in \mathcal{N}_i$, $\left|x_i^0\right|\le \mathcal{W}$ and $\left|\hat{x}_{ij}^0\right|\le \mathcal{W}$.
	\item Observation: Each agent $i$ obtains the binary-valued observed states of its neighbors, as equation \eqref{eq:observation}.
	\item Estimation: Each agent $i$ uses the observed binary-valued measurements to estimate its neighbors' states based on the RPA :
	\begin{align} 
	\hat{x}_{ij}(k)=
	&\Pi _\mathcal{W}\Big\{ \hat{x}_{ij}(k-1 )\vphantom{\frac{\beta }{k}} \notag\\ 
	&+\frac{\beta }{k}\left(\mathcal{F}\left(\mathcal{B}_{ij}-\hat{x}_{ij}(k-1)\right)-\mathcal{S}_{ij}(k)\right)\Big\},
	\label{up:estimate}
	\end{align}
	where $\beta$ is a coefficient in step size of estimation, $\Pi _\mathcal{W}\left\lbrace .\right\rbrace$ is a
	projection operator, which is defined as follows:
	\begin{equation*}
	\Pi _\mathcal{W}(x)=\arg \underset{\left|m\right|\leq \mathcal{W}}{\min} \left| x-m \right|={\begin{cases}-\mathcal{W},\quad &{\text{if}}\; x< -\mathcal{W};\\ x,\quad &{\text{if}}\;|x|\leq \mathcal{W};\\ \mathcal{W},\quad &{\text{if}}\; x>\mathcal{W}. \end{cases}}
	\end{equation*}
	\item Control: Based on the estimates, each agent designs the
	control with a decreasing tracking gain $1/(k + 1)$:
	\begin{align*}
	u_i(k)=&\frac{-1}{k+1}\sum_{j=1  }^{n+1}a_{ij}\left[x_i(k)-\hat{x}_{ij}(k)\right],\; i=1,2, \ldots ,n.
	\end{align*}
	By designing the controller, the state of the agent $i$ is updated as follows
	\begin{align} 
	x_i(k+1)&=x_i(k)-\frac{1}{k+1}\sum_{j=1  }^{n+1}a_{ij}\left[x_i(k)-\hat{x}_{ij}(k)\right].
	\label{agentupdate1}
	\end{align}
	\item Let $k = k + 1$, go back to the first step.
\end{enumerate}

The algorithm can ensure the boundedness of followers' states and the estimates of neighbors' states, which are given as the following remarks. 
\begin{remark}\label{rem:estimate}
	Due to the definition of the projection operator,
	we can get that the estimates of neighbors's states are bounded.
	\begin{equation*}  	
		\left|\hat{x}_{ij}(k)\right| \leq \mathcal{W},\; \forall \; i=1,2,\ldots, n, \; j\in N_i.
	\end{equation*}
\end{remark}
\begin{proposition}[\cite{ref18}, Proposition 1]\label{rem:pro1}
	The states of all the followers satisfy
	\begin{equation*} 	
		\left|x_i(k)\right| \le \mathcal{W},\quad \forall\; i=1, \dots, n, \; k\geq  d_*,
	\end{equation*}
	where  $d_*=\max\left\{d_1,d_2,\dots,d_n\right\}$, $d_i$ is the number of agent $i$'s neighbors. 
\end{proposition}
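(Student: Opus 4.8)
The plan is to rewrite the state recursion \eqref{agentupdate1} as an explicit weighted combination of the current state and the neighbours' estimates, and then read off boundedness from the signs and sizes of the weights. Writing $d_i=\sum_{j=1}^{n+1}a_{ij}$ for the in-degree of follower $i$, equation \eqref{agentupdate1} becomes
\begin{align*}
x_i(k+1)=\Big(1-\frac{d_i}{k+1}\Big)x_i(k)+\frac{1}{k+1}\sum_{j=1}^{n+1}a_{ij}\hat{x}_{ij}(k).
\end{align*}
The coefficient of $x_i(k)$ is $1-\frac{d_i}{k+1}$, while the $d_i$ estimate terms each carry weight $\frac{1}{k+1}$, so that all of the weights sum to $1$. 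Consequently, whenever the self-weight is nonnegative, the right-hand side is a genuine convex combination, and I can bound $|x_i(k+1)|$ by the maximum of $|x_i(k)|$ and the quantities $|\hat{x}_{ij}(k)|$, each of which is at most $\mathcal{W}$ by Remark \ref{rem:estimate}.

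The crucial observation concerns the single time index $k+1=d_i$, i.e.\ $k=d_i-1$: there the self-weight $1-\frac{d_i}{k+1}$ is exactly $0$, so $x_i(d_i)$ collapses to the pure average $\frac{1}{d_i}\sum_{j=1}^{n+1}a_{ij}\hat{x}_{ij}(d_i-1)$ of $d_i$ estimates. Since each estimate obeys $|\hat{x}_{ij}|\le\mathcal{W}$, this yields $|x_i(d_i)|\le\mathcal{W}$ \emph{regardless} of how $x_i$ behaved at earlier times. Using $k=d_i$ as the induction base, I would then argue forward: for every $k\ge d_i$ the self-weight $1-\frac{d_i}{k+1}$ lies in $(0,1]$, so the update is a convex combination of numbers all bounded by $\mathcal{W}$, and hence $|x_i(k)|\le\mathcal{W}$ propagates to $|x_i(k+1)|\le\mathcal{W}$. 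This establishes $|x_i(k)|\le\mathcal{W}$ for all $k\ge d_i$.

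Finally, since $d_*=\max\{d_1,\ldots,d_n\}\ge d_i$ for every follower, the uniform threshold $k\ge d_*$ guarantees $|x_i(k)|\le\mathcal{W}$ simultaneously for all $i=1,\ldots,n$, which is exactly the claim. The main obstacle to watch for is the warm-up phase $k<d_i-1$, where the self-weight $1-\frac{d_i}{k+1}$ is negative: there the convex-combination bound is unavailable and $|x_i(k)|$ may transiently exceed $\mathcal{W}$. What makes the argument go through is that this transient is harmless, because the self-weight passes through exactly $0$ at $k=d_i-1$, resetting the state to a bounded average and decoupling the subsequent trajectory from its possibly unbounded history. The degenerate case $d_i=0$ is trivial, since then $u_i\equiv 0$ and $|x_i(k)|=|x_i^0|\le\mathcal{W}$ for all $k$.
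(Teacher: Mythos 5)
Your proof is correct and follows essentially the same route as the argument in the cited source \cite{ref18}: rewrite \eqref{agentupdate1} as $x_i(k+1)=\bigl(1-\frac{d_i}{k+1}\bigr)x_i(k)+\frac{1}{k+1}\sum_{j\in\mathcal{N}_i}\hat{x}_{ij}(k)$, note that the self-weight vanishes exactly at $k+1=d_i$ so that $x_i(d_i)$ is an average of projected estimates bounded by $\mathcal{W}$, and then propagate the bound forward by the convex-combination property for $k\ge d_i$. Your explicit treatment of the negative-self-weight ``warm-up'' phase and of the degenerate case $d_i=0$ is exactly the right care to take, and taking $k\ge d_*$ uniformizes the bound over all followers as claimed.
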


Let $\theta_{ij}(k)=\hat{x}_{ij}(k)-x_j(k)$. Define the error vectors $\theta(k)$  in a given order as follow
\begin{align}\label{thetak}
	\theta (k)=&(\theta _{1r_1}(k), \ldots,\theta _{1r_{d_1}}(k),\theta _{2r_{d_1+1}}(k),\ldots,\theta _{2r_{d_1+d_2}}(k) \notag\\ &\ldots,\theta _{nr_{d_1+\cdots +d_{n-1}+1}}(k),\ldots,\theta _{nr_{d_1+\cdots +d_{n}}} (k))^T ,
\end{align}
where $r_1,\ldots, r_{d_1}\in \mathcal{N}_1,\quad  r_{d_1+1},\ldots, r_{d_1+d_2}\in \mathcal{N}_2,\cdots, r_{d_1+\cdots +d_{n-1}+1},\ldots, r_{d_1+\cdots +d_n}\in \mathcal{N}_{n}$. 
For any $j \in \mathcal{N}_i, i = 1, . . . , n$, we can define $n+1$ dimensional vectors $m_{ij}$ 
and $w_{ij}$ respectively as follows
\begin{align*} 
	m_{ij}= &(0,\ldots,0,\underbrace{1}_{i \rm {th\;    position}},0,\ldots,0)^T.
\end{align*}
$$w_{ij}= (0,\ldots,0,\underbrace{1}_{j \rm{th\;position}},0,\ldots,0)^T,$$ 
Putting $\{m_{ij} , j \in \mathcal{N}_i, i = 1, . . . , n\}$ 
and $\{w_{ij} , j \in \mathcal{N}_i, i = 1, . . . , n+1\}$ 
in the same order as $\theta(k)$ in
eq. \eqref{thetak}, we can get matrix $M $ 
and matrix $W $ are as follows:
\begin{align}
	M=&\Big[m_{1r_1}, \ldots,m_{1r_{d_1}},m_{2r_{d_1+1}},\ldots,m_{2r_{d_1+d_2}},\ldots, \\ &\quad m_{nr_{d_1+\cdots +d_{n-1}+1}},\ldots,m_{nr_{d_1+\cdots +d_n}} \Big] \notag\\
	 =&\begin{bmatrix}\underbrace{1\;  \cdots \; 1 }_{d_1}\\ &\underbrace{1 \;\cdots \;1 }_{d_2}& & \text{0} \\ \text{0} & & \cdots \\ & & & \underbrace{1\;  \cdots \; 1 }_{d_{n}} \\\text{0} & \text{0} & \cdots & 0 \end{bmatrix}_{(n+1)\times (d_1+\cdots +d_{n})},
	 \label{def:M}
\end{align}
\begin{align}\label{def:W}
W= &\begin{bmatrix}w_{1r_1}^T&\\ \vdots &\\ w_{r_{d_1}1}^T&\\ w_{r_{d_1+1}2}^T&\\ \vdots &\\ w_{r_{d_1+\cdots +d_{n+1}}(n+1)}^T&\\ \end{bmatrix}_{(d_1+\cdots +d_{n+1})\times (n+1)}.
\end{align}

Let $x(k)=[x_1(k), x_2(k), \ldots, x_{n+1}(k)]^T$. The vector form of system updating can be given  as follows by equations \eqref{eq:leader} and \eqref{agentupdate1}.
\begin{equation} 
	x(k+1)=\left(I-\frac{ \mathcal{L}}{k+1}\right)x(k)+\frac{ M}{k+1}\theta(k)+\Gamma(k),
	\label{equation}
\end{equation}
where $\mathcal{L}$ is the $(n+1)\times (n+1)$ Laplacian Matrix of the network $\mathcal{G}$, and $\Gamma(k)=\left[0, 0, \ldots, 0, f(k)\right]^T$.

\begin{remark}
	Comparing with the consensus algorithm in \cite{ref18}, the vector form of updating \eqref{equation} has one more item $\Gamma(k)=\left[0, 0, \ldots, 0, f(k)\right]^T$. So, the properties of $f(k)$  will affect the properties of the algorithm. 
\end{remark}


Let
\begin{align}
	\xi(k)&=\left(x_1-\sum_{j=1  }^{n+1}\pi_jx_j,x_2-\sum_{j=1  }^{n+1}\pi_jx_j,...,x_{n+1}-\sum_{j=1  }^{n+1}\pi_jx_j\right)\notag\\&=\left(I-J_{n+1}\right)x(k),
\end{align}
where $J_{n+1}=1_{n\times 1}\pi$, $\pi=(\pi_1, \pi_2, ..., \pi_{n+1})$ is defined in Remark  \ref{rem2}, $\pi$ is the left eigenvector of Laplacian matrix $\mathcal{L}$ with eigenvalue $0$, and $\pi1_{n+1}=1$.
Since $\left(I-J_{n+1}\right)\left(I-\frac{\mathcal{L}}{k+1}\right)=\left(I-\frac{\mathcal{L}}{k+1}\right)\left(I-J_{n+1}\right)$, we have by \eqref{equation}
\begin{align*}
	\xi(k+1)
	=&\left(I-\frac{\mathcal{L}}{k+1}\right)\xi(k)\\
	&+\frac{\left(I-J_{n+1}\right)M}{k+1}\theta(k)+\left(I-J_{n+1}\right)\Gamma(k).
\end{align*}
Let $\tilde{\xi}(k)=[\tilde{\xi}_1(k),\tilde{\xi}_2(k)...,\tilde{\xi}_{n+1}(k)]^T=\Theta^{-1}\xi(k)$, where  $\Theta^{-1}=\begin{pmatrix}
\tau \\ 
\phi_{n\times (n+1)}
\end{pmatrix}$ is defined in Remark \ref{rem2},  and $ \eta(k)=[\tilde{\xi}_2(k),\tilde{\xi}_3(k),...,\tilde{\xi}_{n+1}(k)]^T$. Then, we have  
$$\tilde{\xi}_1(k+1)=\pi\xi(k+1)=\pi\left(I-J_{n+1}\right)x(k)=0,$$
and 
	\begin{align}
		&\eta(k+1)\notag\\
		=&\left(I-\frac{\tilde{\mathcal{L}}}{k+1}\right)\eta(k)+\frac{\phi(I-J_{n+1})M}{k+1}\theta(k)\notag\\
		&+\phi\left(I-J_{n+1}\right)\Gamma(k),
	\label{update3}
	\end{align}
where $\phi=\phi_{n\times (n+1)}$ for simplicity. 
We can conclude that $\left\Vert\xi(k)\right\Vert=\left\Vert \tilde{\xi}(k)\right\Vert=\left\Vert \eta(k)\right\Vert$, where $\|x\|=\sqrt{E(x^Tx)}$ for any random variable $x$.
If $\lim\limits_{k\to\infty} \|\xi(k)\|=\lim\limits_{k\to\infty}\sqrt{\sum_{i=1  }^{n+1}\left\|\sum_{j=1  }^{n+1}\pi_j(x_i-x_j)\right\|^2}=0$, we can get that
$$\lim\limits_{k\to\infty}\left\| x_i-x_{n+1} \right\|=0,\quad \forall\; i=1, \ldots, n.$$
Hence,  the system can achieve consensus tracking if $\lim\limits_{k\to\infty}\left\Vert \eta(k)\right\Vert=0$.

Since $-\tilde{\mathcal{L}}$ is a Hurwitz matrix, given any positive number $\kappa$, there exists a  matrix $H>0$ such that 
	\begin{equation}\label{unique}
		H\tilde{\mathcal{L}}+\tilde{\mathcal{L}}^TH=\kappa I.
	\end{equation}
Duo to $\lambda_{min}(H)E\left(\eta(k)^T\eta(k)\right)\le E\left(\eta(k)^TH\eta(k)\right)\le \lambda_{max}(H)E\left(\eta(k)^T\eta(k)\right)$, we can get 
\begin{equation}
	\lim\limits_{k\to\infty}E\left(\eta(k)^T\eta(k)\right)=0\Leftrightarrow\lim\limits_{k\to\infty}E\left(\eta(k)^TH\eta(k)\right)=0.
\end{equation}
where $K$ is positive definite symmetric, $\lambda_{min}(H)$ and $\lambda_{max}(H)$ are the minimum and maximum eigenvalues of $H$ respectively.

Let $L_1(k)=E\left(\eta(k)^TH\eta(k)\right), L_2(k)=E\left(\theta(k)^T\theta(k)\right)$, which describe the tracking error and the estimate error respectively in mean square sense. We have the following lemmas on $L_1(k)$ and $L_2(k)$.
\begin{lemma}\label{lem3}
	Under Assuptions \ref{ass1} and \ref{ass2}, we can get that the tracking error $L_1(k)$ satisfies 
	\begin{align}
	L_1(k)\le& \left(1-\frac{1}{2\lambda_Hk}\right)L_1(k-1)+\frac{2\lambda_H\lambda_\phi d_*}{k}L_2(k-1)\notag\\&+O(f(k-1))+O\left(\frac{1}{k^2}\right),  k>T_1,
	\label{iter:L1}
	\end{align}
	where $\lambda_H$ is the maxmum eigenvalue of the matrice $H$, $\lambda_\phi$ is the maximum eigenvalue of the matrix $(I-J_{n+1})^T\phi^TH\phi(I-J_{n+1})$,  $d_*$ is the maximum degree of the nodes in the system network,  and $T_1$ is a constant.
\end{lemma}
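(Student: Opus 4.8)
The plan is to expand the Lyapunov function $L_1(k)=E(\eta(k)^TH\eta(k))$ directly from the recursion \eqref{update3}, written with $k$ replaced by $k-1$, i.e. $\eta(k)=A_k\eta(k-1)+B_k\theta(k-1)+C_{k-1}$, where $A_k=I-\tilde{\mathcal{L}}/k$, $B_k=\phi(I-J_{n+1})M/k$ and $C_{k-1}=\phi(I-J_{n+1})\Gamma(k-1)$. Substituting and using linearity of expectation splits $L_1(k)$ into three quadratic terms ($\eta$--$\eta$, $\theta$--$\theta$, $C$--$C$) and three cross terms. I would treat the $\eta$--$\eta$ term as the source of decay, the $\eta$--$\theta$ cross term as the source of the $L_2$ contribution, and all remaining terms as higher-order remainders. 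Note that $A_k,B_k,C_{k-1}$ are deterministic, so every matrix inequality below can be applied pathwise and then averaged, requiring no independence between $\eta$ and $\theta$.

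For the $\eta$--$\eta$ term I would use $A_k^THA_k=H-\frac1k(H\tilde{\mathcal{L}}+\tilde{\mathcal{L}}^TH)+\frac1{k^2}\tilde{\mathcal{L}}^TH\tilde{\mathcal{L}}$ and invoke the Lyapunov identity \eqref{unique} with the specific choice $\kappa=1$, so that the middle term becomes $-\frac1kI$. This yields $E(\eta(k-1)^TA_k^THA_k\eta(k-1))=L_1(k-1)-\frac1kE(\eta(k-1)^T\eta(k-1))+O(1/k^2)$, the last term being $O(1/k^2)$ because $\eta$ is bounded for $k$ larger than some threshold $T_1\ge d_*$ (the follower states are bounded by Proposition \ref{rem:pro1} and the leader state converges by Assumption \ref{ass:cov}). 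The negative term is then lower-bounded via $E(\eta^T\eta)\ge L_1(k-1)/\lambda_H$, producing a decay contribution $-\frac{1}{\lambda_Hk}L_1(k-1)$.

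The key step is the $\eta$--$\theta$ cross term $2E(\eta(k-1)^TA_k^THB_k\theta(k-1))$. Writing $u=A_k\eta(k-1)$, $v=B_k\theta(k-1)$ and applying $2u^THv\le\epsilon\,u^THu+\epsilon^{-1}v^THv$ with $\epsilon=\frac{1}{2\lambda_Hk}$, the first piece feeds $+\frac{1}{2\lambda_Hk}L_1(k-1)+O(1/k^2)$ back into the $L_1$ coefficient, while the second piece equals $2\lambda_Hk\cdot\frac1{k^2}E\big(\theta(k-1)^TM^T(I-J_{n+1})^T\phi^TH\phi(I-J_{n+1})M\theta(k-1)\big)$. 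Bounding this quadratic form by $\lambda_{max}(M^TM)\,\lambda_\phi\le d_*\lambda_\phi$ — where $M^TM$ from \eqref{def:M} is block-diagonal with all-ones blocks of sizes $d_1,\dots,d_n$, so its largest eigenvalue is $d_*$ — gives exactly $\frac{2\lambda_H\lambda_\phi d_*}{k}L_2(k-1)$. Combining the decay term with this feedback, the coefficient of $\frac1k L_1(k-1)$ is $-\frac{1}{\lambda_H}+\frac{1}{2\lambda_H}=-\frac{1}{2\lambda_H}$, matching the claimed factor $\big(1-\frac{1}{2\lambda_Hk}\big)$.

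Finally I would dispose of the remainders: the $\theta$--$\theta$ term is $O(1/k^2)$ since $B_k=O(1/k)$ and $\theta$ is bounded; the $C$--$C$ term is $O(f(k-1)^2)$; the $\eta$--$C$ cross term is $O(|f(k-1)|)$ by Cauchy--Schwarz and boundedness of $\eta$; and the $\theta$--$C$ cross term is $O(|f(k-1)|/k)$. Since $\sum f$ converges by Lemma \ref{lem:f}, $f(k-1)\to0$, so for $k>T_1$ one has $f(k-1)^2\le|f(k-1)|$ and all leader-driven terms collapse into $O(f(k-1))$; assembling the four groups yields \eqref{iter:L1}. I expect the main obstacle to be the simultaneous calibration of the two free constants: the Lyapunov scale $\kappa$ in \eqref{unique} and the Young parameter $\epsilon$ must be chosen together ($\kappa=1$, $\epsilon=\frac{1}{2\lambda_Hk}$) so that the cross term's unavoidable feedback into the $L_1$ coefficient still leaves net decay $\frac{1}{2\lambda_Hk}$ while the estimate-error coefficient lands exactly on $2\lambda_H\lambda_\phi d_*$. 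Keeping the remainder bookkeeping consistent — distinguishing $O(f)$ from $O(f^2)$ and tracking precisely where the threshold $T_1$ (for state boundedness and for $|f|\le 1$) enters — is the other delicate point.
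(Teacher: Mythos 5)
Your proposal is correct and is essentially the paper's own argument: the same expansion of $L_1(k)$ into quadratic and cross terms via \eqref{update3}, the same use of the Lyapunov identity \eqref{unique} for the decay, a weighted Cauchy--Schwarz/Young step on the $\eta$--$\theta$ cross term with $\lambda_{\max}(M^TM)=d_*$, and absorption of all leader-driven terms into $O(f(k-1))$ using boundedness of the states and $f(k)\to 0$. The only difference is calibration: by keeping the full $-\frac{1}{\lambda_H k}$ decay (treating the $k^{-2}$ term as a remainder) and taking $\epsilon=\frac{1}{2\lambda_H k}$, you land exactly on the stated constants $1-\frac{1}{2\lambda_H k}$ and $2\lambda_H\lambda_\phi d_*$, whereas the paper's own choices (sacrificing half the decay to absorb the $k^{-2}$ term and taking $\alpha=c_1/2$) actually deliver $1-\frac{1}{4\lambda_H k}$ and $4\lambda_H\lambda_\phi d_*$ --- an immaterial constant discrepancy that your bookkeeping avoids.
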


\begin{lemma}\label{lem2}
	Under Assuptions \ref{ass1} and \ref{ass2}, the estimate error  $L_2(k)$ satisfies
	\begin{align}
		L_2(k)\le& \left(1-\frac{2\beta f_{B}-\frac{ \lambda_{W}\lambda_L}{\alpha}-2d_*}{k}\right) L_2(k-1)\notag\\
		&+\frac{\alpha h}{k}L_1(k-1)+O\left(f(k-1)\right)+O\left(\frac{1}{k^2}\right),
		\label{iter:L2}
	\end{align}
	as $k>T_2$, where  $f_B =f(B + \mathcal{W}), B = \max_{j\in N_i, i = 1, \ldots, n}|\mathcal{B}_{ij} |$, $ \mathcal{W}$ is the bound of the projection of estimation, $\alpha>0$,  $\lambda _{W}=\lambda_{\max}(WW^T)$, $\lambda_L=\lambda_{\max}(\mathcal{L}\mathcal{L}^T)$, and $h=\frac{1}{\lambda_{\min}(H)}$, and  $d_*$ is the maximum degree of the nodes in the system network,  $T_2$ is a constant.
\end{lemma}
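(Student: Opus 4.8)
The plan is to derive a one-step recursion for the estimation error directly from the RPA update \eqref{up:estimate}, isolating the contraction produced by the binary-valued identification and then cleanly accounting for the coupling introduced by the moving target $x_j(k)$. First I would work edge-by-edge. Fix a pair $(i,j)$ with $j\in\mathcal N_i$ and recall $\theta_{ij}(k)=\hat x_{ij}(k)-x_j(k)$. By Proposition~\ref{rem:pro1} (and Assumption~\ref{ass:cov} for the leader) every $|x_j(k)|\le\mathcal W$ once $k$ is large, so $x_j(k)$ lies in the projection interval and the non-expansiveness of $\Pi_{\mathcal W}$ gives $\theta_{ij}(k)^2\le\tilde\theta_{ij}(k)^2$, where $\tilde\theta_{ij}(k)$ is the pre-projection error. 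Using $x_j(k)=x_j(k-1)+u_j(k-1)$ I would rewrite it as $\tilde\theta_{ij}(k)=\big(\theta_{ij}(k-1)-u_j(k-1)\big)+\tfrac{\beta}{k}g_{ij}(k)$ with $g_{ij}(k)=\mathcal F(\mathcal B_{ij}-\hat x_{ij}(k-1))-\mathcal S_{ij}(k)$.

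The heart of the argument is the conditional-expectation step. Conditioning on the $\sigma$-field $\mathcal F_{k-1}$ generated by the history up to time $k-1$, Assumption~\ref{ass2} and the $\mathcal F_{k-1}$-measurability of $x_j(k)$ give $E[\mathcal S_{ij}(k)\mid\mathcal F_{k-1}]=\mathcal F(\mathcal B_{ij}-x_j(k))$, hence $E[g_{ij}(k)\mid\mathcal F_{k-1}]=\mathcal F(\mathcal B_{ij}-\hat x_{ij}(k-1))-\mathcal F(\mathcal B_{ij}-x_j(k))$. By the mean-value theorem this equals $-f(\zeta)\big(\theta_{ij}(k-1)-u_j(k-1)\big)$ for some $\zeta$, and since $|\hat x_{ij}(k-1)|,|x_j(k)|\le\mathcal W$ and $|\mathcal B_{ij}|\le B$ we have $|\zeta|\le B+\mathcal W$; unimodality of the zero-mean normal density then yields $f(\zeta)\ge f_B$. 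Squaring $\tilde\theta_{ij}(k)$, taking $E[\cdot\mid\mathcal F_{k-1}]$, and using $|g_{ij}(k)|\le1$ for the remainder gives the per-edge contraction $E[\theta_{ij}(k)^2\mid\mathcal F_{k-1}]\le\big(1-\tfrac{2\beta f_B}{k}\big)\big(\theta_{ij}(k-1)-u_j(k-1)\big)^2+O(1/k^2)$, valid once $k$ is large enough that the coefficient is nonnegative.

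Summing over all edges, I pass to vector form, so the remaining task is to expand $\|\theta(k-1)-WU(k-1)\|^2$, where $W$ is the selection matrix of \eqref{def:W}. Here I would substitute the control identity $U(k-1)=-\tfrac{1}{k}\mathcal L x(k-1)+\tfrac{1}{k}M\theta(k-1)+\Gamma(k-1)$ read off from \eqref{equation}, and use $\mathcal L\mathbf 1_{n+1}=0$ to replace $\mathcal L x(k-1)$ by $\mathcal L\xi(k-1)$. Expanding the square produces three groups of terms beyond $\|\theta(k-1)\|^2$: the cross term $\tfrac{2}{k}\theta(k-1)^TW\mathcal L\xi(k-1)$, which I bound by $\|W\mathcal L\xi\|\le\sqrt{\lambda_W\lambda_L}\,\|\xi\|$ followed by Young's inequality with weight $\alpha$, splitting it into $\tfrac{\lambda_W\lambda_L}{\alpha k}\|\theta\|^2+\tfrac{\alpha}{k}\|\xi\|^2$; the quadratic term $-\tfrac{2}{k}\theta^TWM\theta$, bounded by $\tfrac{2d_*}{k}\|\theta\|^2$ using $\|WM\|_2\le d_*$; and the terms carrying $\Gamma(k-1)=[0,\dots,0,f(k-1)]^T$, which are $O(f(k-1))$, the leftover squared $1/k$-terms being $O(1/k^2)$. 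Taking total expectation and invoking $E\|\xi(k-1)\|^2=E\|\eta(k-1)\|^2\le h\,L_1(k-1)$ (from $\lambda_{\min}(H)E\|\eta\|^2\le L_1$), I collect the $1/k$ coefficients to obtain \eqref{iter:L2} for $k>T_2$.

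The step I expect to be the main obstacle is the conditional-expectation/contraction computation: turning the binary observation $\mathcal S_{ij}(k)$ into a contraction of the exact strength $2\beta f_B$ hinges on the martingale structure under $\mathcal F_{k-1}$, on the mean-value lower bound $f(\zeta)\ge f_B$, and—crucially—on tracking the \emph{moving} target through $u_j(k-1)$ so that the state disagreement $\mathcal L\xi$ (hence $L_1$) and the leader's drift $f(k-1)$ are separated from the genuine $L_2$ contraction. Choosing the Young weight so that $\lambda_W\lambda_L$ lands on the $L_2$ term and $\alpha h$ on the $L_1$ term, rather than the reverse, is the delicate calculation fixing the final constants.
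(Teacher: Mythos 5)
Your proposal is correct and follows essentially the same route as the paper's proof: per-edge non-expansiveness of $\Pi_{\mathcal W}$, the mean-value theorem with the lower bound $f(\zeta)\ge f_B$ on the bounded interval $|\zeta|\le B+\mathcal W$, substitution of the control/leader dynamics to produce the $W\mathcal Lx$, $WM\theta$, and $\Gamma$ terms, a Cauchy--Schwarz/Young split with weight $\alpha$ giving $\tfrac{\lambda_W\lambda_L}{\alpha k}L_2+\tfrac{\alpha h}{k}L_1$ (via $\mathcal L x=\mathcal L\xi$ and $E\|\xi\|^2=E\|\eta\|^2\le hL_1$), and the $2d_*/k$ bound on the $WM$ term. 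The only difference is organizational---you extract the $\bigl(1-\tfrac{2\beta f_B}{k}\bigr)$ factor multiplicatively by conditioning on $\mathcal F_{k-1}$ before expanding, whereas the paper expands additively and applies the mean-value argument to the cross term---and your explicit conditioning and use of $\mathcal L 1_{n+1}=0$ actually state more carefully what the paper leaves implicit.
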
 
	The proofs of Lemma \ref{lem3} and  Lemma \ref{lem2} are given in Appendix \ref{ap:le3} and \ref{ap:le2}, respectively.
	
	From Lemma \ref{lem3} and  Lemma \ref{lem2}, we can see that  the traking error $L_1(k)$ and the estimate error $L_2(k)$ affect each other.
	So we analyze $L_1(k)$ and $L_2(k)$ together to give the convergence analysis. Before that, we give a lemma as follows.

\begin{lemma}[Theorem 1.2.23, \cite{ref23}]\label{lem4}
	If ${x_n}$ satisfies the iterative equation 
	\begin{equation*}
		x_{k+1}=(1-a_k)x_k+b_k,\; k\ge 0,
	\end{equation*}
	where $a_k \in [0,1)$, $\sum_{k=1  }^{\infty}b_k$ converges, then
	\begin{equation*}
		x_k \to 0,\; \forall \; x_0\neq 0 \Leftrightarrow\sum_{k=1}^{\infty}a_k=\infty.
	\end{equation*}
\end{lemma}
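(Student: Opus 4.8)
The plan is to solve the scalar affine recurrence explicitly and then analyse the homogeneous and particular parts separately, since the whole equivalence is driven by the behaviour of the infinite product $\prod_j(1-a_j)$. First I would unroll the recursion to obtain
$$x_{k+1}=\Big(\prod_{j=0}^{k}(1-a_j)\Big)x_0+\sum_{i=0}^{k}\Big(\prod_{j=i+1}^{k}(1-a_j)\Big)b_i,$$
writing $P_k=\prod_{j=0}^{k}(1-a_j)$ for the homogeneous factor and $c_{k,i}=\prod_{j=i+1}^{k}(1-a_j)\in(0,1]$ for the transition weights (with $c_{k,k}=1$). Because $a_j\in[0,1)$, every factor lies in $(0,1]$, so $P_k$ is nonincreasing and converges to some $P_\infty\ge0$. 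The pivotal elementary fact is the infinite-product dichotomy: for $a_j\in[0,1)$, $P_\infty=0\iff\sum_j a_j=\infty$. I would prove this using $1-a\le e^{-a}$ (whence $P_k\le e^{-\sum a_j}$, so $\sum a_j=\infty\Rightarrow P_\infty=0$) and, for the reverse, the bound $\log(1-a)\ge -2a$ on $[0,1/2]$ together with the fact that $\sum a_j<\infty$ forces $a_j\to0$, so only finitely many factors are excluded, each strictly positive.

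For the direction $\sum a_k=\infty\Rightarrow x_k\to0$, the homogeneous term $P_k x_0\to0$ is immediate from the dichotomy, and the real work is showing the particular term $S_k:=\sum_{i=0}^{k}c_{k,i}b_i\to0$ using only that $\sum b_i$ converges. I expect \emph{this} to be the main obstacle, because the weights $c_{k,i}$ are close to $1$ for $i$ near $k$, so no crude bound on $\sum|b_i|$ is available (the series need not converge absolutely). My approach is a tail/Toeplitz argument: fix $\varepsilon>0$ and split $S_k$ at an index $N$. The head $\sum_{i=0}^{N}c_{k,i}b_i$ tends to $0$ as $k\to\infty$ for fixed $N$, since each $c_{k,i}=\prod_{j=i+1}^{k}(1-a_j)\to0$ (the tail product also vanishes when $\sum a_j=\infty$). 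For the tail I would write $b_i=R_i-R_{i+1}$ with $R_m=\sum_{i\ge m}b_i\to0$, sum by parts, and use the identities $c_{k,i}-c_{k,i-1}=a_ic_{k,i}\ge0$ and $\sum_{i}a_ic_{k,i}=1-c_{k,0}\le1$ to obtain $|\sum_{i=N+1}^{k}c_{k,i}b_i|\le 3\sup_{m>N}|R_m|$. Letting $k\to\infty$ and then $N\to\infty$ forces $S_k\to0$, hence $x_k\to0$ for every $x_0$.

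For the converse I would argue by contraposition: if $\sum a_k<\infty$ then $P_\infty>0$, and the difference $\delta_k=x_k-x_k'$ of the solutions from two initial values $x_0\ne x_0'$ obeys the purely homogeneous recurrence $\delta_{k+1}=(1-a_k)\delta_k$, so $\delta_k=\big(\prod_{j=0}^{k-1}(1-a_j)\big)(x_0-x_0')\to P_\infty(x_0-x_0')\ne0$. Taking two nonzero initial values (e.g. $1$ and $2$) shows they cannot both generate sequences converging to $0$, so $x_k\to0$ fails for some $x_0\ne0$; this is exactly the contrapositive of the remaining implication. Everything outside the summation-by-parts estimate for $S_k$ is routine bookkeeping on the explicit solution, so I would concentrate the care there and treat the two product estimates and the homogeneous-difference step as short lemmas.
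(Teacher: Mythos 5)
Your proof is correct, but there is nothing in the paper to compare it against: the paper never proves this lemma, it simply imports it as Theorem 1.2.23 of the cited monograph \cite{ref23} and uses it as a black box in the convergence proof of Theorem \ref{the1}. So your blind attempt supplies exactly what the paper omits, and it does so soundly. Checking the steps: the unrolled solution $x_{k+1}=P_kx_0+\sum_{i=0}^{k}c_{k,i}b_i$ is right; the product dichotomy ($P_\infty=0$ iff $\sum_j a_j=\infty$, via $1-a\le e^{-a}$ for divergence and $\log(1-a)\ge-2a$ on $[0,1/2]$ plus $a_j\to0$ for convergence) is where the hypothesis $a_k\in[0,1)$ enters and is argued correctly; the summation-by-parts estimate is indeed the only delicate point, and your identities $c_{k,i}-c_{k,i-1}=a_ic_{k,i}\ge0$ and $\sum_{i=N+2}^{k}\left(c_{k,i}-c_{k,i-1}\right)=c_{k,k}-c_{k,N+1}\le1$ do yield $\left|\sum_{i=N+1}^{k}c_{k,i}b_i\right|\le 3\sup_{m>N}|R_m|$, which handles conditionally convergent $\sum b_i$ as required; and the converse via the homogeneous difference $\delta_{k+1}=(1-a_k)\delta_k$ of two solutions correctly resolves the quantifier ``for all $x_0\ne0$,'' since two distinct nonzero initial conditions cannot both produce $x_k\to0$ when $P_\infty>0$. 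One small remark: in the head term you need $\sum_{j>i}a_j=\infty$ for each fixed $i$, which you correctly obtain as the tail of a divergent nonnegative series, so no gap there. In short, the proposal is a complete, self-contained proof of a statement the paper only cites.
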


\begin{theorem}[Convergence]\label{the1}
	For the algorithm of CRS with Assumptions \ref{ass1}-\ref{ass:cov}, if the coefficient $\beta$ in estimate satisfies
	\begin{equation*}
		\beta>\frac{l_1}{f_B},
	\end{equation*}
    where $l_1=\frac{h\lambda_{W}\lambda_L}{4\lambda_H\lambda_\phi d^*}+4\lambda_\phi^2\lambda_H^3d^2_*+d^*$, $\lambda_H, \lambda_\phi, h, \lambda_W, \lambda_L, d^*$ are the same as those in Lemma \ref{lem3} and Lemma \ref{lem2},
    we can get 
	\begin{equation} 
		E\left(\hat{x}_{ij}(k)-x_j(k)\right)^2 \to 0 
	\end{equation}
	for $j\in \mathcal{N}_i,i=1,2, \ldots n+1 ,i\neq j$ and
	\begin{equation} 
		E\left(x_i(k)-x_{n+1}(k)\right)^2 \to 0 
		\label{eq2}
	\end{equation}
	for $ \forall \; i = 1, . . . , n+1$.
	
\end{theorem}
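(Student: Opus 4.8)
The plan is to fuse the two coupled recursions of Lemma~\ref{lem3} and Lemma~\ref{lem2} into a single scalar Lyapunov recursion to which Lemma~\ref{lem4} applies. Abbreviating the two bounds as
\begin{align*}
L_1(k)&\le(1-a_1/k)L_1(k-1)+(b_1/k)L_2(k-1)+r_k,\\
L_2(k)&\le(1-a_2/k)L_2(k-1)+(b_2/k)L_1(k-1)+r_k,
\end{align*}
with $a_1=1/(2\lambda_H)$, $b_1=2\lambda_H\lambda_\phi d_*$, $a_2=2\beta f_B-\lambda_W\lambda_L/\alpha-2d_*$, $b_2=\alpha h$, and common forcing $r_k=O(f(k-1))+O(1/k^2)$, I would study the weighted function $L(k)=L_1(k)+\mu L_2(k)$ for a constant $\mu>0$ fixed below. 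The whole argument reduces to showing that $L(k)$ obeys a contraction $L(k)\le(1-c/k)L(k-1)+(1+\mu)r_k$ with $c>0$.

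Adding the two inequalities with weight $\mu$ gives
\[ L(k)\le\big(1-(a_1-\mu b_2)/k\big)L_1(k-1)+\mu\big(1-(a_2-b_1/\mu)/k\big)L_2(k-1)+(1+\mu)r_k, \]
so taking $c=\min\{a_1-\mu b_2,\,a_2-b_1/\mu\}$ yields the desired scalar recursion, and $c>0$ precisely when $\mu$ lies in the interval $(b_1/a_2,\,a_1/b_2)$. This interval is nonempty iff $b_1 b_2<a_1 a_2$. Here I would exploit the parameter $\alpha$ left free in Lemma~\ref{lem2}: choosing $\alpha=2\lambda_H\lambda_\phi d_*/h$ symmetrizes the cross terms, since then $b_2=\alpha h=2\lambda_H\lambda_\phi d_*=b_1$, and a direct computation shows that $b_1 b_2<a_1 a_2$ is equivalent to $\beta f_B>\frac{h\lambda_W\lambda_L}{4\lambda_H\lambda_\phi d_*}+4\lambda_\phi^2\lambda_H^3 d_*^2+d_*=l_1$, which is exactly the hypothesis on $\beta$. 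Under this hypothesis I fix any admissible $\mu$ and the corresponding $c>0$.

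It then remains to verify the hypotheses of Lemma~\ref{lem4}. For $k$ large enough $a_k:=c/k\in[0,1)$, and $\sum_k c/k=\infty$; the forcing $\sum_k(1+\mu)r_k$ converges because $\sum_k 1/k^2<\infty$ and $\sum_k f(k-1)$ converges by Lemma~\ref{lem:f}. Since the recursion is an inequality with $L(k)\ge0$, I would first dominate $L(k)$ by the sequence generated by the associated equality (with matching initial value), apply Lemma~\ref{lem4} to that sequence, and conclude $L(k)\to0$. As $L_1,L_2\ge0$, this forces both $L_1(k)\to0$ and $L_2(k)\to0$. The second limit is $E(\hat x_{ij}(k)-x_j(k))^2\to0$, while the first, through the identities $L_1(k)=E(\eta(k)^TH\eta(k))$ and $\|\xi(k)\|=\|\eta(k)\|$ established before the lemmas, gives $E(x_i(k)-x_{n+1}(k))^2\to0$.

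The main obstacle is securing a strictly positive net drift $c$ in the presence of the two-way feedback between errors: the term $(b_1/k)L_2(k-1)$ in Lemma~\ref{lem3} and $(b_2/k)L_1(k-1)$ in Lemma~\ref{lem2} inflate each error through the other, and only the determinant condition $a_1 a_2>b_1 b_2$---realized by the joint choice of $\alpha$ and $\mu$ and guaranteed by the lower bound on $\beta$---keeps this loop from breaking convergence. A secondary point that I would check carefully is the summability of the $O(f(k-1))$ forcing: Lemma~\ref{lem:f} gives convergence of $\sum f(i)$, and I would confirm that the forcing actually inherited from $\Gamma(k)$ in the derivation of Lemmas~\ref{lem3} and \ref{lem2} is summable, rather than merely bounded by a conditionally convergent series.
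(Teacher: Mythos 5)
Your proposal is correct, and its skeleton coincides with the paper's proof: both start from the coupled recursions of Lemmas \ref{lem3} and \ref{lem2}, both fix $\alpha=2\lambda_H\lambda_\phi d_*/h$ so that the two cross-coupling coefficients coincide, both show that the hypothesis $\beta>l_1/f_B$ is exactly the determinant condition $a_1a_2>b_1b_2$ (the paper's $ac>b^2$), and both finish with Lemma \ref{lem4}. The genuine difference is the fusion device. The paper stacks $P(k)=(L_1(k),L_2(k))^T$ and uses the spectral bound $\left\Vert I-Q/k\right\Vert\le 1-\lambda_{\min}(Q)/k$ for the symmetrized matrix $Q$, with $\lambda_{\min}(Q)>0$ following from the determinant condition; you instead form the scalar function $L_1+\mu L_2$ and obtain the drift $c=\min\{a_1-\mu b_2,\,a_2-b_1/\mu\}$, positive for $\mu\in(b_1/a_2,\,a_1/b_2)$, an interval that is nonempty under exactly the same condition. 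Your packaging is more elementary (no matrix norms or eigenvalue computations) and is more scrupulous on two technicalities the paper passes over silently: the domination of the inequality recursion by the associated equality recursion before invoking Lemma \ref{lem4}, which is stated for equalities, and the observation that the $O(f(k-1))$ forcing terms are really bounded by multiples of $|f(k-1)|$, so that absolute summability of $f$ is what is actually needed, whereas Lemma \ref{lem:f} only yields convergence of $\sum f(i)$ --- a caveat that applies equally to the paper's own proof and is harmless when $f$ is eventually of one sign, as in Theorem \ref{the2}. What the paper's vector formulation buys in exchange is the explicit constant $\lambda_{\min}(Q)$, which is reused verbatim as the convergence-rate exponent in Theorem \ref{the2}; your drift $c$ would play the same role but does not coincide with $\lambda_{\min}(Q)$ in general, so the rate statement would take a slightly different (though equally valid) form.
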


\begin{proof}
	Considering \eqref{iter:L1} and \eqref{iter:L2}, we can get
	\begin{equation}
		\begin{cases}
			L_1(k)\le& \left(1-\frac{1}{2\lambda_Hk}\right)L_1(k-1)+\frac{2\lambda_H\lambda_\phi d_*}{k}L_2(k-1)\\
			&+O(f(k-1))+O\left(\frac{1}{k^2}\right), \\
			L_2(k)\le& \left(1-\frac{2\beta f_B-\frac{ \lambda_{W}\lambda_L}{\alpha}-2d_*}{k}\right) L_2(k-1)\\
			&+\frac{\alpha h}{k}L_1(k-1)+O(f(k-1))+O\left(\frac{1}{k^2}\right).
		\end{cases}
		\label{com:L1L2}
	\end{equation}
	Let $a=\frac{1}{2\lambda_H},b=-2\lambda_H\lambda_\phi d_*,c=2\beta f_B-\frac{\lambda_{W}\lambda_L}{\alpha}-2d_*,d=-\alpha h$.
	and $Q=\begin{pmatrix}a & b\\ d & c \end{pmatrix}$. By \eqref{com:L1L2}, we can get 
	\begin{align*}
		\left\Vert P(k) \right\Vert \le &\left\Vert \left(I-\frac{Q}{k}\right)P(k-1)+O(f(k-1))+O\left(\frac{1}{k^2}\right) \right\Vert\\
		\le &\left\Vert I-\frac{Q}{k}\right\Vert\left\Vert P(k-1) \right\Vert
		+O(f(k-1))+O\left(\frac{1}{k^2}\right),
	\end{align*}
	where $P(k)=(L_1(k), L_2(k))^T$.
	
	Let $\alpha=\frac{2\lambda_H\lambda_\phi d_*}{h}$, then $b=d$. The matrix $Q$ is symmetrical. It follows that
	$$\left\Vert I-\frac{Q}{k}\right\Vert\leq 1-\frac{\lambda_{\min}(Q)}{k}, \;\text{if}\; k>\lambda_{\max}(Q).$$
	Then, we have
	\begin{align}
		\left\Vert P(k) \right\Vert \le&\left(1-\frac{\lambda_{\min}(Q)}{k}\right)\left\Vert P(k-1) \right\Vert+O(f(k-1))\notag\\
		&+O\left(\frac{1}{k^2}\right).
		\label{eq:Pk}
	\end{align}
If  $\beta>\frac{l_1}{f_B} $, where $l_1=\frac{h\lambda_{W}\lambda_L}{4\lambda_H\lambda_\phi d^*}+4\lambda_\phi^2\lambda_H^3d^2_*+d_*$, then 	
\begin{equation*}
ac=\frac{1}{2\lambda_H}(2\beta f_B-\lambda_W\lambda_L/\alpha-2d^*)>4\lambda_H^2\lambda_\phi ^2d^2_*=b^2.
\end{equation*}
Hence, the minimum eigenvalue of the matrix $Q$ satisfies
\begin{equation*} 
\lambda _{\min }(Q)=\frac{a+c-\sqrt{(a+c)^2-4(ac-b^2)}}{2}>0.
\end{equation*}
	Duo to $\sum_{k=1}^{\infty}\frac{1}{k^2}<\infty$,	we have by Lemma \ref{lem:f} and Lemma \ref{lem4}
	$$\left\Vert P(k)\right\Vert\rightarrow 0.$$
	Hence, 
	$$L_1(k)\rightarrow0, \quad L_2(k)\rightarrow 0,$$
	which implies the theorem.
	
\end{proof}

\begin{theorem}[Convergence rate]\label{the2}
   Let the chang rate of the reference state be as $ f(k) = \frac{1}{k^{1+\varepsilon}}\; (0<\varepsilon <1) $. For the algorithm of CRS with Assumptions \ref{ass1}-\ref{ass:cov}, we can get
	$$E\left(\hat{x}_{ij}(k)-x_j(k)\right)^2=		\begin{cases}
	O\left(\frac{1}{k^{\lambda_{\min}{(Q)}}}\right),\;&\text{if}\; \varepsilon>\lambda_{\min}{(Q)};\\
	O\left(\frac{\log k}{k^{\varepsilon}}\right),\;&\text{if}\; \varepsilon=\lambda_{\min}{(Q)};\\
	O\left(\frac{1}{k^\varepsilon}\right),\;&\text{if}\; \varepsilon<\lambda_{\min}{(Q)},\\
	\end{cases} $$ and 
	$$E\left(x_i(k)-x_{n+1}(k)\right)^2=	\begin{cases}
	O\left(\frac{1}{k^{\lambda_{\min}{(Q)}}}\right),\;&\text{if}\; \varepsilon>\lambda_{\min}{(Q)};\\
	O\left(\frac{\log k}{k^{\varepsilon}}\right),\;&\text{if}\;  \varepsilon=\lambda_{\min}{(Q)};\\
	O\left(\frac{1}{k^\varepsilon}\right),\;&\text{if}\; \varepsilon<\lambda_{\min}{(Q)},\\
	\end{cases}$$ 
	for $j\in N_i, i=1, \ldots,n$, where
	\begin{equation*} 
	Q=\begin{pmatrix}\frac{1}{2\lambda_H} & -2\lambda_H\lambda_\phi d_*\\ -2\lambda_H\lambda_\phi d_* & 2\beta f_B-\frac{h\lambda_{W}\lambda_L}{2\lambda_H\lambda_\phi d_*}-2d_* \end{pmatrix}, 
	\end{equation*}
	$\lambda_H, \lambda_\phi , h, f_B,\lambda_W, \lambda_L, d^*$ are the same as those in Lemmas \ref{lem3} and \ref{lem2}.
	
\end{theorem}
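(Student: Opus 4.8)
The plan is to build directly on the scalar recursion already derived in the proof of Theorem~\ref{the1}. Recall that, after symmetrizing $Q$ via the choice $\alpha=\frac{2\lambda_H\lambda_\phi d_*}{h}$, we obtained
\begin{equation*}
\left\Vert P(k)\right\Vert \le \left(1-\frac{\lambda_{\min}(Q)}{k}\right)\left\Vert P(k-1)\right\Vert + O(f(k-1))+O\left(\frac{1}{k^2}\right),
\end{equation*}
with $P(k)=(L_1(k),L_2(k))^T$ and $\lambda_{\min}(Q)>0$ under the standing hypotheses. First I would substitute the prescribed rate $f(k)=1/k^{1+\varepsilon}$. Since $0<\varepsilon<1$ forces $1+\varepsilon<2$, the term $O(1/k^{1+\varepsilon})$ dominates $O(1/k^2)$, so the inequality collapses to the single scalar recursion
\begin{equation*}
p_k \le \left(1-\frac{a}{k}\right)p_{k-1}+\frac{C}{k^{1+\varepsilon}},\qquad p_k:=\left\Vert P(k)\right\Vert,\; a:=\lambda_{\min}(Q),
\end{equation*}
valid for all $k$ past some threshold and a suitable constant $C>0$, where $1-a/k\in[0,1)$ for large $k$.

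The core of the argument is the asymptotic analysis of this recursion. I would unroll it using the product $\Phi(k):=\prod_{j}(1-a/j)$, which satisfies $\Phi(k)\sim c\,k^{-a}$ by the estimate $\log(1-a/j)=-a/j+O(1/j^2)$, so that $\sum_j \log(1-a/j)\sim -a\log k$. Writing the solution as the homogeneous part $\Phi(k)p_{k_0}$ plus the convolution $\sum_{i}\frac{C}{i^{1+\varepsilon}}\,\frac{\Phi(k)}{\Phi(i)}$ and using $\Phi(k)/\Phi(i)\sim (i/k)^a$, the driven part reduces to
\begin{equation*}
\frac{C}{k^{a}}\sum_{i\le k} i^{\,a-1-\varepsilon}.
\end{equation*}
Everything then hinges on the exponent $a-1-\varepsilon$ of the summand, producing exactly the three regimes claimed: if $a>\varepsilon$ the sum grows like $k^{a-\varepsilon}$ and the driven part is $O(1/k^{\varepsilon})$; if $a=\varepsilon$ the sum is of order $\log k$, giving $O(\log k/k^{\varepsilon})$; and if $a<\varepsilon$ the sum converges, so the driven part is $O(1/k^{a})$, which the homogeneous part $\Phi(k)p_{k_0}=O(k^{-a})$ also matches. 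Recalling $a=\lambda_{\min}(Q)$ yields the three displayed cases for $\left\Vert P(k)\right\Vert$, hence for both $L_1(k)$ and $L_2(k)$; since $L_1(k)=E(\eta(k)^TH\eta(k))$ controls $\left\Vert x_i(k)-x_{n+1}(k)\right\Vert^2$ (via the eigenvalue bounds on $H$) and $L_2(k)=E(\theta(k)^T\theta(k))$ is the estimate error, the two rate expressions follow.

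The main obstacle I anticipate is making the product/sum asymptotics rigorous rather than heuristic: the approximation $\Phi(k)\sim c\,k^{-a}$ requires showing that the accumulated $O(1/j^2)$ corrections contribute only a bounded multiplicative factor, and the comparison of $\sum i^{\,a-1-\varepsilon}$ with the corresponding integral must be controlled uniformly across all three exponent regimes, with the boundary case $a=\varepsilon$ needing the logarithmic correction tracked explicitly. A cleaner route that sidesteps the delicate constant-bookkeeping is to verify the upper bounds by induction: in each regime, posit the candidate bound ($Ck^{-\varepsilon}$, $Ck^{-\varepsilon}\log k$, or $Ck^{-a}$) and check that the recursion preserves it for a suitable $C$ and all large $k$, which avoids evaluating $\Phi(k)$ while still delivering the stated $O(\cdot)$ rates.
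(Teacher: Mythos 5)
Your proposal is correct and follows essentially the same route as the paper: starting from the contraction inequality for $\left\Vert P(k)\right\Vert$ established in the proof of Theorem~\ref{the1}, unrolling it into a homogeneous part plus a convolution, using $\prod_{j}\left(1-\varrho/j\right)\sim k^{-\varrho}$, and splitting into the three regimes according to the exponent $\varrho-1-\varepsilon$ of the summand. Your added care about making the product asymptotics rigorous (the paper writes $\prod_{j=i+1}^{k}(1-\varrho/j)=i^{\varrho}/k^{\varrho}$ as an equality, which is really only an up-to-constants estimate) and your fallback induction argument are refinements of, not departures from, the paper's proof.
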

	\begin{proof}
	Let $\varrho=\lambda_{\min}{(Q)}$, we have by \eqref{eq:Pk}
		\begin{align*}
		\left\Vert P(k)\right\Vert &\leq \left(1-\frac{\varrho}{k}\right)\left\Vert P(k-1)\right\Vert+O\left(\frac{1}{(k-1)^{1+\varepsilon}}\right)\\
		&=\sum_{i=1}^{k}\prod_{j=i+1}^{k}\left(1-\frac{\varrho}{j}\right) \frac{1}{i^{1+\varepsilon}}+\prod_{i=1}^{k}\left(1-\frac{\varrho}{j}\right)P(0)\\
		&=\sum_{i=1}^{k}\frac{i^\varrho}{k^\varrho}\frac{1}{i^{1+\varepsilon}}+O\left(\frac{1}{k^{\varrho}}\right)\\
		&=\frac{1}{k^\varrho}\sum_{i=1}^{k} i^{\varrho-1-\varepsilon}+O\left(\frac{1}{k^{\varrho}}\right)\\
		&=
		\begin{cases}
		O\left(\frac{1}{k^\varrho}\right),\;&\text{if}\; \varepsilon>\varrho;\\
		O\left(\frac{\log k}{k^\varepsilon}\right),\;&\text{if}\; \varepsilon=\varrho;\\
		O\left(\frac{1}{k^\varepsilon}\right),\;&\text{if}\; \varepsilon<\varrho.
		\end{cases}
	\end{align*}
   Since $E\left(\hat{x}_{ij}(k)-x_j(k)\right)^2\leq L_1(k)\leq \|P(k)\| $ and $E\left(x_i(k)-x_{n+1}(k)\right)^2\leq L_2(k)\leq \|P(k)\| $ hold for $j\in N_i, i=1, \ldots,n$, we can get the theorem.
	\end{proof}
\begin{remark}
	Theorem \ref{the2} shows that  the convergence rate mainly depends on the properties of topologies and the change rate of reference state. 
\end{remark}
\begin{remark}
	Theorems \ref{the1} and \ref{the2} hold for the case of a constant reference state, which is a special case of asymptotically convergent reference state. The results are in consistent with those in \cite{refCAC} with a constant reference state. 
\end{remark}

\section{The reference state is bounded}\label{sec:bod}
In this section, the state of leader is assumed to be bounded, which can be described as the following assumption. 
\begin{assumption}\label{ass:f}
	There exists a constant $\varsigma<\infty$ such that 
	$$\|x_{n+1}(k)\|<\varsigma, \;\forall k=1, 2, \ldots.$$
\end{assumption}
By Assumption \ref{ass:f}, we can get the following proposition. 
\begin{proposition}\label{pro:f}
	There exists a constant $ \epsilon$ such that the change  rate of leader's state satisfies
	$$\|f(k)\|\leq  \epsilon.$$
\end{proposition}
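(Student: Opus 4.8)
The plan is to express $f(k)$ directly through the leader dynamics and then control it using Assumption \ref{ass:f} together with the triangle inequality. The key observation is that \eqref{eq:leader} gives $f(k) = x_{n+1}(k+1) - x_{n+1}(k)$, so bounding $\|f(k)\|$ reduces to bounding the difference of two consecutive reference states, each of which is already uniformly bounded by $\varsigma$.

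Concretely, I would first rewrite $f(k) = x_{n+1}(k+1) - x_{n+1}(k)$ from \eqref{eq:leader}, and then apply the triangle inequality to obtain
$$\|f(k)\| \le \|x_{n+1}(k+1)\| + \|x_{n+1}(k)\|.$$
By Assumption \ref{ass:f}, each term on the right-hand side is strictly less than $\varsigma$, uniformly in $k$, so that $\|f(k)\| < 2\varsigma$ for every $k$. Taking $\epsilon = 2\varsigma$ then yields the claim, and this constant is manifestly independent of $k$, which is exactly what the proposition asserts.

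There is essentially no technical obstacle here; the argument is a one-line consequence of the leader's update law and the uniform boundedness hypothesis. The only point worth watching is the convention used for $\|\cdot\|$, since the excerpt at one place sets $\|x\| = E(x^Tx)$, which is not a genuine norm and does not obey the triangle inequality. If that convention is the intended one, the same conclusion follows instead from the elementary inequality $(a-b)^2 \le 2a^2 + 2b^2$, giving $\|f(k)\| = E\!\left(f(k)^2\right) \le 2\,E\!\left(x_{n+1}(k+1)^2\right) + 2\,E\!\left(x_{n+1}(k)^2\right) < 4\varsigma$, so one would simply take $\epsilon = 4\varsigma$. In either reading the bounding constant is explicit and uniform in $k$, so the proposition holds regardless of which norm convention is adopted.
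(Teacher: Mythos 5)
Your proof is correct and coincides with the argument the paper leaves implicit: Proposition~\ref{pro:f} is stated without proof as an immediate consequence of Assumption~\ref{ass:f}, and the intended reasoning is exactly yours, namely writing $f(k)=x_{n+1}(k+1)-x_{n+1}(k)$ from \eqref{eq:leader} and bounding both terms uniformly by $\varsigma$. Your additional care with the mean-square convention $\|x\|=E(x^Tx)$, replacing the triangle inequality by $(a-b)^2\le 2a^2+2b^2$, correctly accounts for the paper's inconsistent definitions of $\|\cdot\|$ and does not affect the conclusion.
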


 The algorithm in Section \ref{sec:cov} cannot ensure the convergence of tracking error since Lemma \ref{lem:f} does not hold. We give the following algorithm with constant step size of estimate and control gain, which can ensure the boundedness of tracking error. 
The algorithm of  bounded reference state (BRS) is given as five steps as that in the algorithm of Section \ref{sec:cov}, where the estimation and control are given as follows.
\begin{enumerate}
	\item Estimation: Each agent $i$ estimates its neighbors' states based on the RPA with a constant gain:
	\begin{align} 
		\hat{x}_{ij}(k)=
		&\Pi _\mathcal{W}\Big\{ \hat{x}_{ij}(k-1 )\vphantom{\frac{\beta }{k}} \notag\\ 
		&+\beta \left(\mathcal{F}\left(\mathcal{B}_{ij}-\hat{x}_{ij}(k-1)\right)-\mathcal{S}_{ij}(k)\right)\Big\},
		\label{eq:esbN}
	\end{align}
	where $i=1, \ldots, n$, $j\in N_i$, $\beta$ is a constant.
	\item Control: Based on the estimates, each agent designs the control with a constant:
	\begin{equation}
		u_i(k)=-\frac{1}{N}\sum_{j=1  }^{n+1}a_{ij}\left[x_i(k)-\hat{x}_{ij}(k)\right],\; i=1,2, \ldots ,n, 
		\label{eq:ub}
	\end{equation}
	where $N$ is a constant.
\end{enumerate}

By the controller \eqref{eq:ub}, the state of agent is updated as follows
\begin{align*} 
x_i(k+1)&=x_i(k)-\frac{1}{N}\sum_{j=1  }^{n+1}a_{ij}\left[x_i(k)-\hat{x}_{ij}(k)\right],
\end{align*}
which can be written in a vector form 
\begin{equation} 
x(k+1)=\left(I-\frac{ \mathcal{L}}{N}\right)x(k)+\frac{ M}{N}\theta(k)+\Gamma(k),
\label{up:Ngain}
\end{equation}
where $x(k)=(x_1(k), x_2(k), \ldots, x_{n=1}(k))^T$, $\mathcal{L}$ is the $(n+1)\times (n+1)$ Laplacian Matrix of the network $\mathcal{G}$, $M $ is defined in \eqref{def:M}, $\theta(k)$ is defined in \eqref{thetak}, and $\Gamma(k)=\left[0, 0, \ldots, 0, f(k)\right]^T$. 
Since $\eta(k)=\phi(I-J_{n+1})x(k)$, we have
\begin{align}
&\eta(k+1)\notag\\
=&\left(I-\frac{\tilde{\mathcal{L}}}{N}\right)\eta(k)+\frac{\phi(I-J_{n+1})M}{N}\theta(k)\notag\\
&+\phi\left(I-J_{n+1}\right)\Gamma(k),
\label{update:eta}
\end{align}



By algorithm of BRS with estimate \eqref{eq:esbN} and control \eqref{eq:ub}, we can get the properties of tracking error $L_1(k)=E\left(\eta(k)^TH\eta(k)\right)$ and estimate error $L_2(k)=E\left(\theta(k)^T\theta(k)\right)$ as follows.

\begin{lemma}\label{lem:BLCG1}
		Under Assumptions \ref{ass1}, \ref{ass2} and \ref{ass:f},  the tracking error $L_1(k)$ by  algorithm of BRS satisfies
	\begin{align}
	L_1(k)\le& 3\left(1-\frac{1}{N\lambda_H}\right)L_1(k-1)+\frac{3\lambda_\phi  d_*}{N^2}L_2(k-1)\notag\\
	&+3\lambda_\phi  \epsilon^2, \quad N>2\lambda_H\lambda_L,
	\label{L3}
	\end{align}
		where $ \lambda_H$, $\lambda_L$, $\lambda_\phi $  and $d_*$ are the same as those in Lemmas \ref{lem3} and \ref{lem2}, $ \epsilon$ is defined in Proposition \ref{pro:f}.
\end{lemma}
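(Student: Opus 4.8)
The plan is to treat \eqref{update:eta} as a one-step affine recursion for $\eta$ and to bound the Lyapunov value $L_1(k)=E(\eta(k)^{T}H\eta(k))$ in terms of its value at $k-1$. Writing $A=I-\tilde{\mathcal{L}}/N$, $B=\phi(I-J_{n+1})M/N$ and $g(k-1)=\phi(I-J_{n+1})\Gamma(k-1)$, the shifted form of \eqref{update:eta} reads $\eta(k)=A\eta(k-1)+B\theta(k-1)+g(k-1)$. Since $H>0$, the pointwise algebraic inequality $(a+b+c)^{T}H(a+b+c)\le 3(a^{T}Ha+b^{T}Hb+c^{T}Hc)$ (the $H$-weighted version of $\|a+b+c\|^{2}\le 3(\|a\|^{2}+\|b\|^{2}+\|c\|^{2})$) lets me discard every cross term without worrying about its sign or expectation. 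Taking expectations then gives
\begin{align*}
L_1(k)\le{}& 3E\!\left(\eta(k-1)^{T}A^{T}HA\,\eta(k-1)\right)\\
&+3E\!\left(\theta(k-1)^{T}B^{T}HB\,\theta(k-1)\right)+3E\!\left(g(k-1)^{T}Hg(k-1)\right),
\end{align*}
so the three target terms of \eqref{L3} are produced by bounding $A^{T}HA$, $B^{T}HB$ and $g^{T}Hg$ separately.

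For the contraction term I would expand $A^{T}HA=H-\tfrac{1}{N}(H\tilde{\mathcal{L}}+\tilde{\mathcal{L}}^{T}H)+\tfrac{1}{N^{2}}\tilde{\mathcal{L}}^{T}H\tilde{\mathcal{L}}$ and substitute the Lyapunov identity \eqref{unique}, $H\tilde{\mathcal{L}}+\tilde{\mathcal{L}}^{T}H=\kappa I$, to obtain $A^{T}HA=H-\tfrac{\kappa}{N}I+\tfrac{1}{N^{2}}\tilde{\mathcal{L}}^{T}H\tilde{\mathcal{L}}$. Using $H\preceq\lambda_H I$ I can trade the isotropic drop $-\tfrac{\kappa}{N}I$ for the $H$-proportional bound $-\tfrac{\kappa}{N\lambda_H}H$, while the curvature term is controlled by $\tilde{\mathcal{L}}^{T}H\tilde{\mathcal{L}}\preceq\lambda_H\lambda_L I$; here $\lambda_{\max}(\tilde{\mathcal{L}}^{T}\tilde{\mathcal{L}})\le\lambda_L$ because $\Theta$ is orthogonal, so $\Theta^{-1}\mathcal{L}\Theta$ preserves singular values and $\tilde{\mathcal{L}}$ is a principal (diagonal) block of it, whose largest singular value cannot exceed that of $\mathcal{L}$. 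Collecting these estimates and invoking $N>2\lambda_H\lambda_L$ to dominate the residual $O(1/N^{2})$ curvature contribution, I get $A^{T}HA\preceq(1-\tfrac{1}{N\lambda_H})H$, which yields the leading coefficient $3(1-\tfrac{1}{N\lambda_H})$. I expect this step to be the main obstacle: pinning down the exact contraction rate $1/(N\lambda_H)$ together with the precise threshold $N>2\lambda_H\lambda_L$ requires choosing $\kappa$ so that the $\kappa/N$ drop simultaneously covers $1/(N\lambda_H)$ and the curvature residual, and a careless bound either loses the clean rate or produces spurious dependence on $\lambda_{\min}(H)$.

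The remaining two terms are routine bookkeeping. For $B^{T}HB=\tfrac{1}{N^{2}}M^{T}(I-J_{n+1})^{T}\phi^{T}H\phi(I-J_{n+1})M$, the definition of $\lambda_\phi$ gives $(I-J_{n+1})^{T}\phi^{T}H\phi(I-J_{n+1})\preceq\lambda_\phi I$, whence $B^{T}HB\preceq\tfrac{\lambda_\phi}{N^{2}}M^{T}M$; since $M^{T}M$ is block diagonal with one all-ones block of size $d_i$ per agent, its largest eigenvalue is $\max_i d_i=d_*$, producing the $\tfrac{3\lambda_\phi d_*}{N^{2}}L_2(k-1)$ term. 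For the forcing term, $g(k-1)^{T}Hg(k-1)=\Gamma(k-1)^{T}(I-J_{n+1})^{T}\phi^{T}H\phi(I-J_{n+1})\Gamma(k-1)\le\lambda_\phi\,\Gamma(k-1)^{T}\Gamma(k-1)=\lambda_\phi f(k-1)^{2}$, and Proposition \ref{pro:f} bounds $f(k-1)^{2}\le\epsilon^{2}$, giving the constant $3\lambda_\phi\epsilon^{2}$. Summing the three contributions under the condition $N>2\lambda_H\lambda_L$ then establishes \eqref{L3}.
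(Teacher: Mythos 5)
Your proposal is correct in substance and follows the same route as the paper's proof in Appendix \ref{ap:BLCG1}: the identical three-term decomposition of $\eta(k)$ via \eqref{update:eta}, the same inequality $(a+b+c)^TH(a+b+c)\le 3\left(a^THa+b^THb+c^THc\right)$ to discard cross terms, and the same bounds $\frac{\lambda_\phi d_*}{N^2}L_2(k-1)$ and $\lambda_\phi\epsilon^2$ for the estimation and forcing terms (your identification of $\lambda_{\max}(M^TM)=d_*$ from the block all-ones structure of $M^TM$ is a detail the paper leaves implicit). The one genuine divergence is the contraction step, and there your version is the stronger one. The paper disposes of that step by citing \eqref{eq:neta} and \eqref{pr:L1} from Appendix \ref{ap:le3}, which take $\kappa=1$ in \eqref{unique}; under $N>2\lambda_H\lambda_L$ that argument yields the coefficient $1-\frac{1}{2N\lambda_H}$, so the paper's own proof really establishes \eqref{L3} with $3\left(1-\frac{1}{2N\lambda_H}\right)$ rather than the stated $3\left(1-\frac{1}{N\lambda_H}\right)$ --- an internal mismatch between the lemma and its proof. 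Your idea of leaving $\kappa$ free and asking the drop $\frac{\kappa}{N}I$ to cover both $\frac{1}{N\lambda_H}H$ and the curvature residual $\frac{1}{N^2}\tilde{\mathcal{L}}^TH\tilde{\mathcal{L}}\preceq\frac{\lambda_H\lambda_L}{N^2}I$ does deliver the stated constant: since $H\preceq\lambda_H I$, it suffices that $\kappa-\frac{\lambda_H\lambda_L}{N}\ge 1$, and under $N>2\lambda_H\lambda_L$ any fixed $\kappa\ge\frac{3}{2}$ works. To close the argument you should state this choice explicitly rather than say ``choose $\kappa$ suitably,'' and note two consistency points: the choice is legitimate because $\lambda_H$, $\lambda_\phi$ and the threshold $N>2\lambda_H\lambda_L$ in the lemma are all defined relative to the same $H=H(\kappa)$ (the $\kappa=1$ solution used in Lemmas \ref{lem3} and \ref{lem2} differs only by a scalar factor, and those lemmas' stated bounds only weaken when $\lambda_H$ grows, so nothing breaks); and your claim $\lambda_{\max}(\tilde{\mathcal{L}}^T\tilde{\mathcal{L}})\le\lambda_L$ rests on the orthogonality of $\Theta$ asserted in Remark \ref{rem2}, the same (strong) assumption the paper's condition $N>2\lambda_H\lambda_L$ implicitly needs.
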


\begin{lemma}\label{lem:BLCG2}
	Under Assumptions \ref{ass1}, \ref{ass2} and \ref{ass:f}, the mean square error of state estimation $L_2(k)$ by algorithm of BRS follows
\begin{align}
L_2(k)\leq &3\left(1-\frac{\beta f_B}{2}\right)\lambda_ML_2(k-1)\notag\\
&+\frac{3\left(1-\beta f_B/2\right)\lambda_{W}\lambda_L h}{N^2}L_1(k-1)\notag\\
&+3\left(1-\frac{\beta f_B}{2}\right)\lambda_{W} \epsilon^2+\frac{nd_*\beta^2}{4}
\label{L4}
\end{align}
	where $f_B, \lambda_L, \lambda_W, \lambda_L, h, d_*$ are the same as those in Lemma \ref{lem2}, $\lambda_M=\lambda_{\max}\left(\left(I-\frac{WM}{N}\right)^T\left(I-\frac{WM}{N}\right)\right)$, $ \epsilon$ is defined  in Proposition \ref{pro:f}.
\end{lemma}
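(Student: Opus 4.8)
The plan is to derive a per-edge recursion for the estimate error $\theta_{ij}(k)=\hat{x}_{ij}(k)-x_j(k)$ and then assemble the components into the claimed vector bound. First I would use the non-expansiveness of the projection $\Pi_{\mathcal W}$: since $|x_j(k)|\le \mathcal W$ (Proposition \ref{rem:pro1} for followers, together with Assumption \ref{ass:f} controlling the leader), the projection can be dropped at the cost of an inequality, giving
$$|\theta_{ij}(k)|\le\left|\theta_{ij}(k-1)-\big(x_j(k)-x_j(k-1)\big)+\beta\,e_{ij}(k)\right|,$$
where $e_{ij}(k)=\mathcal F(\mathcal B_{ij}-\hat{x}_{ij}(k-1))-\mathcal S_{ij}(k)$ is the innovation and $x_j(k)-x_j(k-1)$ is the shift of the observed neighbor's state while it is being estimated.

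Next I would split the innovation into its conditional mean and a martingale difference. Since $\mathcal S_{ij}(k)$ is Bernoulli with parameter $\mathcal F(\mathcal B_{ij}-x_j(k))$ given the past, the conditional mean is $\mathcal F(\mathcal B_{ij}-\hat{x}_{ij}(k-1))-\mathcal F(\mathcal B_{ij}-x_j(k))$. Applying the mean value theorem to $\mathcal F$ and lower-bounding the density by $f_B$ over the bounded range of its argument (Assumption \ref{ass2} and Remark \ref{rem:estimate}) turns the drift into a contraction of factor $(1-\beta f(\zeta_{ij}))$ acting on the shifted error $\theta_{ij}(k-1)-(x_j(k)-x_j(k-1))$, while the martingale part has conditional variance at most $1/4$. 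Squaring, taking conditional expectation so that the cross term vanishes, and restricting $\beta$ so that $(1-\beta f(\zeta_{ij}))^2\le 1-\beta f_B/2$, I obtain a scalar bound with contraction $(1-\beta f_B/2)$ and additive $\beta^2/4$; summing over the $d_1+\cdots+d_n\le nd_*$ edges produces the $\tfrac{nd_*\beta^2}{4}$ term.

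The key step is then to express the state shift exactly and fold the estimation--control coupling into a single operator. Stacking the shifted errors and writing the state increment in vector form via \eqref{up:Ngain}, namely $x(k)-x(k-1)=-\tfrac{\mathcal L}{N}x(k-1)+\tfrac{M}{N}\theta(k-1)+\Gamma(k-1)$, the shifted-error vector becomes
$$\theta(k-1)-W\big(x(k)-x(k-1)\big)=\Big(I-\tfrac{WM}{N}\Big)\theta(k-1)+\tfrac{W\mathcal L}{N}x(k-1)-W\Gamma(k-1).$$
Applying $\|a+b+c\|^2\le 3(\|a\|^2+\|b\|^2+\|c\|^2)$ and taking expectations gives the three structured terms: $3\lambda_M L_2(k-1)$ from $I-WM/N$; a term from $\tfrac{W\mathcal L}{N}x(k-1)$ which I would convert using $\mathcal L x=\mathcal L\xi$ (since $\mathcal L J_{n+1}=0$), $\|\xi\|=\|\eta\|$, and $\eta^T\eta\le h\,\eta^TH\eta$, producing $\tfrac{3\lambda_W\lambda_L h}{N^2}L_1(k-1)$; and $3\lambda_W\epsilon^2$ from $W\Gamma(k-1)$ via $\|f(k-1)\|\le\epsilon$ (Proposition \ref{pro:f}). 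Multiplying the first three by the contraction factor $(1-\beta f_B/2)$ and adding the noise term yields exactly \eqref{L4}.

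The main obstacle is the coupling between estimation and control: because the neighbors' states move while they are being estimated, the naive error recursion carries the unknown increment $x_j(k)-x_j(k-1)$. The essential observation that makes the argument close is that this increment is itself an \emph{exact} linear function of $\mathcal L x$, $M\theta$, and $\Gamma$, so it can be absorbed into the operator $I-WM/N$ (contributing $\lambda_M$), a tracking-error term (contributing $L_1$), and a bounded leader-drift term (contributing $\epsilon^2$). A secondary technical point I would handle carefully is verifying that the density argument $\mathcal B_{ij}-\zeta_{ij}$ stays in the range where $f\ge f_B$ on the leader edge $j=n+1$, where only $\|x_{n+1}(k)\|\le\varsigma$ rather than the $\mathcal W$-bound is available, and pinning down the admissible range of $\beta$ that secures the factor $(1-\beta f_B/2)$.
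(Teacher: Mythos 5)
Your proposal follows essentially the same route as the paper's proof: drop the projection by non-expansiveness, use the mean value theorem on the innovation to obtain the contraction factor $\left(1-\beta f_B/2\right)$ with the $\beta^2/4$ variance term (summing to $nd_*\beta^2/4$ over at most $nd_*$ edges), and then absorb the shifted error $\hat{x}_{ij}(k-1)-x_j(k)$ exactly as the paper's auxiliary vector $\gamma(k-1)=\left(I-\tfrac{WM}{N}\right)\theta(k-1)+\tfrac{W\mathcal{L}}{N}x(k-1)-W\Gamma(k-1)$, bounded by the three-term inequality to produce $\lambda_M$, the $\lambda_W\lambda_L h/N^2$ coupling with $L_1$, and the $\lambda_W\epsilon^2$ term. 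Your closing caveats (the admissible range of $\beta$, which the paper fixes as $\beta\le 1/(2f(0))$, and the density lower bound on the leader edge where only $\varsigma$ rather than $\mathcal{W}$ bounds $x_{n+1}(k)$) are legitimate fine points that the paper itself handles only implicitly.
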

The proof of Lemmas \ref{lem:BLCG1} and \ref{lem:BLCG2} are given in Appendix \ref{ap:BLCG1} and \ref{ap:BLCG2}, respectively.

\begin{theorem}\label{the3}
   For algorithm of BRS with Assumptions \ref{ass1}, \ref{ass2} and \ref{ass:f}, if the  coefficient $\beta$ in estimate \eqref{eq:esbN} satisfies
   \begin{equation}
   \left(1-\frac{\beta f_B}{2}\right)^2<\frac{1-a^2-b^2}{c^2+d^2-(ad-bc)^2},
   \label{con:beta}
   \end{equation}
   where $a=3\left(1-\frac{1}{N\lambda_H}\right)$, $b=\frac{3\lambda_\phi d_*}{N^2}$, $c=\frac{3\lambda_{W}\lambda_Lh}{N^2}$, $d=3\lambda_M$, $N>2\lambda_H\lambda_L$, $\lambda_H$, $\lambda_\phi$, $d_*$, $\lambda_{W}$,  $\lambda_L$, $h$, $\lambda_M$ are the same as those in Lemmas \ref{lem:BLCG1} and \ref{lem:BLCG2}, we can get that for any $ \delta>0$ there exitsts a $K>0$ such that 
%
	\begin{equation} 
		E\left(\hat{x}_{ij}(k)-x_j(k)\right)^2 \le \frac{\left\Vert D \right\Vert}{1-\left\Vert Q \right\Vert}+ \delta, \; k>K
		\label{eq1}
	\end{equation}
	for $j\in \mathcal{N}_i,i=1,2, \ldots n+1,$ and
	\begin{equation} 
		E\left(x_i(k)-x_{n+1}(k)\right)^2 \le \frac{\left\Vert D \right\Vert}{1-\left\Vert Q \right\Vert}+ \delta, \; k>K
	\end{equation}
	for $ \forall \quad i = 1, . . . , n+1$,
	where 
	$$Q=\begin{pmatrix}a, & b\\ \left(1-\frac{\beta f_B}{2}\right)c, & \left(1-\frac{\beta f_B}{2}\right)d \end{pmatrix},$$
	and $$ D=\begin{pmatrix}3\lambda_\phi  \epsilon^2\\ 3\left(1-\frac{\beta f_B}{2}\right)\lambda_{W} \epsilon^2+\frac{nd_*\beta^2}{4} \end{pmatrix}.$$

\end{theorem}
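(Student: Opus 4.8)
The plan is to convert the two coupled scalar bounds from Lemmas \ref{lem:BLCG1} and \ref{lem:BLCG2} into a single vector recursion and then run a geometric contraction argument with the constant coefficient matrix $Q$. First I would set $P(k)=(L_1(k),L_2(k))^T$. Reading the coefficients in \eqref{L3} and \eqref{L4} directly, the pair of inequalities combines, entrywise, into $P(k)\le QP(k-1)+D$ with $Q$ and $D$ exactly as stated in the theorem; the common prefactor $1-\tfrac{\beta f_B}{2}$ in \eqref{L4} is what produces the second row of $Q$. Because every entry of $Q$, $D$ and $P(k-1)$ is nonnegative, I apply the spectral norm together with submultiplicativity to obtain the scalar recursion
\begin{equation*}
\|P(k)\|\le\|Q\|\,\|P(k-1)\|+\|D\|.
\end{equation*}

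The central step, and the one I expect to be the main obstacle, is to show that \eqref{con:beta} guarantees $\|Q\|<1$. Since $\|Q\|^2=\lambda_{\max}(Q^TQ)$, this is equivalent to $I-Q^TQ\succ0$, which for the symmetric $2\times2$ matrix $I-Q^TQ$ reduces to positivity of its leading entry and of its determinant. Writing $\rho=1-\tfrac{\beta f_B}{2}$ and forming $Q^TQ$ from $Q=\bigl(\begin{smallmatrix}a&b\\ \rho c&\rho d\end{smallmatrix}\bigr)$, one finds
\begin{equation*}
\det(I-Q^TQ)=(1-a^2-b^2)-\rho^2\bigl[(c^2+d^2)-(ad-bc)^2\bigr].
\end{equation*}
Setting this positive and solving for $\rho^2$ reproduces \eqref{con:beta} verbatim, in the regime where $c^2+d^2-(ad-bc)^2>0$ and $1-a^2-b^2>0$ (precisely the regime that makes \eqref{con:beta} a nonempty constraint on $\beta$); the remaining leading-minor positivity is checked in the same regime, giving $I-Q^TQ\succ0$ and hence $\|Q\|<1$. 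The bulk of the effort here is the careful $2\times2$ algebra identifying the determinant condition with \eqref{con:beta}.

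With $\|Q\|<1$ in hand, I would iterate the scalar recursion to get
\begin{equation*}
\|P(k)\|\le\|Q\|^{k}\|P(0)\|+\|D\|\sum_{i=0}^{k-1}\|Q\|^{i}\le\|Q\|^{k}\|P(0)\|+\frac{\|D\|}{1-\|Q\|}.
\end{equation*}
Since $\|Q\|^{k}\|P(0)\|\to0$, for any $\delta>0$ there is a $K$ with $\|Q\|^{k}\|P(0)\|<\delta$ for all $k>K$, so $\|P(k)\|\le\tfrac{\|D\|}{1-\|Q\|}+\delta$ there. Finally, both the estimate error $E(\hat{x}_{ij}(k)-x_j(k))^2$ and the tracking error $E(x_i(k)-x_{n+1}(k))^2$ are dominated by the components $L_2(k)$ and $L_1(k)$ of $P(k)$ (the latter via the relations $\|\xi(k)\|=\|\eta(k)\|$ and $\lambda_{\min}(H)\|\eta(k)\|^2\le L_1(k)$ recorded earlier), hence by $\|P(k)\|$, so both inherit the bound $\tfrac{\|D\|}{1-\|Q\|}+\delta$ for $k>K$, which is the assertion. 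This is the constant-gain analogue of the contraction argument already used for $\|P(k)\|$ in the proofs of Theorems \ref{the1} and \ref{the2}, now yielding a nonzero steady-state residual $\tfrac{\|D\|}{1-\|Q\|}$ driven by the change-rate bound $\epsilon$.
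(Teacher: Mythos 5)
Your proposal is correct and follows essentially the same route as the paper's own proof: stack Lemmas \ref{lem:BLCG1} and \ref{lem:BLCG2} into the vector recursion $Z(k)\le QZ(k-1)+D$, take spectral norms to obtain $\Vert Z(k)\Vert\le\Vert Q\Vert\,\Vert Z(k-1)\Vert+\Vert D\Vert$, and iterate geometrically to the steady-state bound $\frac{\Vert D\Vert}{1-\Vert Q\Vert}+\delta$. The only place you go beyond the paper is in actually verifying that \eqref{con:beta} forces $\Vert Q\Vert<1$ via the identity $\det\left(I-Q^TQ\right)=\left(1-a^2-b^2\right)-\rho^2\left[\left(c^2+d^2\right)-(ad-bc)^2\right]$ with $\rho=1-\frac{\beta f_B}{2}$; the paper merely asserts this implication, and your algebra checks out.
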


\begin{proof}[Proof of Theorem~{\rm\ref{the3}}]
	Considering the tracking error \eqref{L3} and the estimation \eqref{L4} together, we can get 
\begin{equation}
	\begin{cases}
	L_1(k)\le& 3\left(1-\frac{1}{N\lambda_H}\right)L_1(k-1)+\frac{3\lambda_\phi  d_*}{N^2}L_2(k-1)\notag\\&+3\lambda_\phi  \epsilon^2 \\
	L_2(k) \le & 3\left(1-\frac{\beta f_B}{2}\right)\lambda_ML_2(k-1) \\
	&+\frac{3\left(1-\beta f_B/2\right)\lambda_{W}\lambda_L h}{N^2}L_1(k-1)\\
	&+3\left(1-\frac{\beta f_B}{2}\right)\lambda_{W} \epsilon^2+\frac{nd_*\beta^2}{4}
	\end{cases}
\end{equation}
Let $Z(k)=\begin{pmatrix}L_1(k) \\ L_2(k) \end{pmatrix} $. Then, 
\begin{align*}
\left\Vert Z(k) \right\Vert &\le \left\Vert QZ(k-1)+D \right\Vert\\
&\le\left\Vert Q \right\Vert  \left\Vert Z(k-1) \right\Vert+\left\Vert D \right\Vert\\
&\le\left\Vert Q\right\Vert^{k}  \left\Vert Z(0) \right\Vert+\left\Vert D \right\Vert\sum_{i=0}^{k-1}\left\Vert Q \right\Vert\\
&\le\left\Vert Q\right\Vert^{k}  \left\Vert Z(0) \right\Vert+\frac{\left\Vert D \right\Vert \left[1-\left\Vert Q \right\Vert^{k}\right]}{1-\left\Vert Q \right\Vert}
\end{align*}
If $\beta$ satisfies \eqref{con:beta}, we can get that
$$\left\Vert Q\right\Vert=\sqrt{\lambda_{\max}(Q^TQ)}<1. $$
Sequently, 
$$ \lim_{k\rightarrow\infty}\left\Vert Q\right\Vert^{k}  \left\Vert Z(0) \right\Vert+\frac{\left\Vert D \right\Vert \left[1-\left\Vert Q \right\Vert^{k}\right]}{1-\left\Vert Q \right\Vert}=\frac{\left\Vert D \right\Vert}{1-\left\Vert Q \right\Vert}.$$
That is to say, for any $\delta>0$ there exitsts a $K>0$ such that
 $$\left\Vert Q\right\Vert^{k}  \left\Vert Z(0) \right\Vert+\frac{\left\Vert D \right\Vert \left[1-\left\Vert Q \right\Vert^{k}\right]}{1-\left\Vert Q \right\Vert}\leq \frac{\left\Vert D \right\Vert}{1-\left\Vert Q \right\Vert}+ \delta, k>K.$$
 Hence, we can get the theorem.
%
\end{proof}

\begin{remark}
	Theorem \ref{the3} shows that the tracking error and estimate error by algorithm of BRS are bounded, and the bound depends on the change rate of reference state if the topology is given. The bigger the change rate of reference state, the greater the error bound of tracking.
\end{remark}
\begin{remark}
	By the proof of Theorem \ref{the3}, we can see that the tracking error  and the estimate error will be in the bound  with an exponential rate.
\end{remark}

\section{Numerical Simulation}\label{sec:sim}

Consider a multi-agent system with $1$ leader and $4$ followers, where the leader is agent $5$. The network topology is shown in the following figure.
\begin{figure}[!htp]
	\centering
	\includegraphics[width=1.8in]{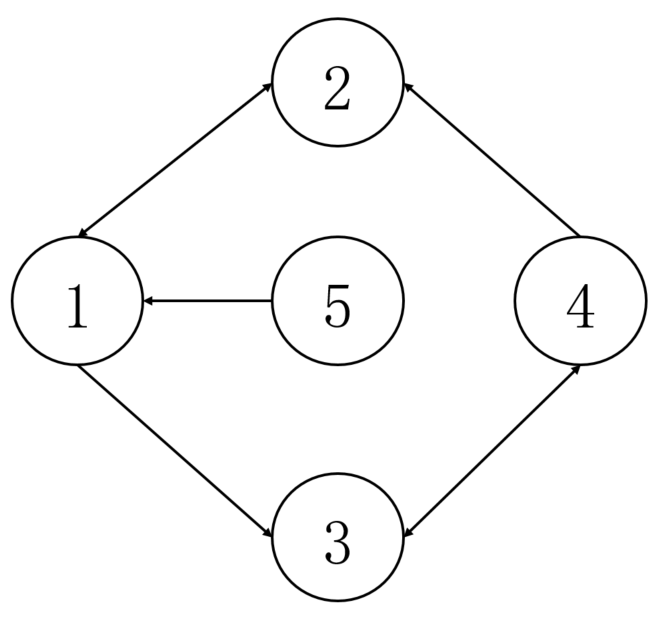}
	\caption{Network topology of three multi-agents.}
\end{figure}

The neighbor set of the agent are as follows:
\noindent  $\mathcal{N}_1=\{2,5\}, \mathcal{N}_2=\{1,4\}, \mathcal{N}_3=\{1,4\}, \mathcal{N}_4=\{3\}$, where $\mathcal{N}_i$ is the set of agent $i$'s neighbors. As the leader, agent 5 cannot receive information from the followers.
The Laplacian matrix  $\mathcal{L}$ is  $\mathcal{L}=\begin{bmatrix}2&-1&0&0&-1\\ -1&2&0&-1&0\\ -1&0&1&0&0 \\0&0&-1&1&0\\ 0&0&0&0&0 \end{bmatrix}$,
$\Theta=\begin{bmatrix}1&0&1.802&0.45&-1.25\\ 1&0&-2.25&0.8&-0.56\\ 1&-1&-2.25&0.8&-0.56 \\1&1&1&1&1 \\1&0&0&0&0 \end{bmatrix}$,
and $\Theta^{-1}=\begin{bmatrix}0&0&0&0&1\\ 0&1&-1&0&0\\ 0.194&-0.156&-0.086&0.11&0.06 \\ 0.242&0.134&0.301&0.54&-1.22 \\ -0.436&-0.98&0.785&0.35&0.28\end{bmatrix}$.
Joldan standard type of Laplacian matrix is $\tilde{\mathcal{L}}=\begin{bmatrix}
2&0&0&0\\0&3.247&0&0 \\ 0&0&0.2&0 \\0&0&0&1.55 \end{bmatrix}$.
Then, we can find $H=\begin{bmatrix}
0.25&0&0&0\\0&0.154&0&0 \\ 0&0&2.5&0 \\0&0&0&0.322 \end{bmatrix}$ such that $H\tilde{\mathcal{L}}+\tilde{\mathcal{L}}^TH=I$. Then we can get $h=\lambda_{min}^{-1}(H)=5, c_1=3.8, \lambda_{\Theta}=\lambda_{max}\left(\Theta^T\Theta\right)=2.85, \lambda_n=\lambda_5=3.247, \lambda_{c_2}=\lambda_{max}(I-J_{n+1})^T\phi^TH\phi(I-J_{n+1})=3.5$,  the
parameter $l_1$ and $l_2$ in conditions of Theorem 2 can
be calculated by 
$l_1=\frac{h^2\lambda_{W}\lambda_L}{4\lambda_\phi }+\frac{4c_1^2d_*^2}{\lambda_\phi ^3}+d_*=86.54$ and $l_2=\frac{8\lambda_\phi ^2d_*^2}{c_1^3-2\lambda_\phi ^2}+\frac{c_1h\lambda_{W}\lambda_L}{2\lambda_\phi }+2d_*+1=25.69$.

Let the initial value of the agents $x(0) = [x_1(0), x_2(0), x_3(0),x_4(0),x_5(0)]^T =[-30,-10,20,10,15]^T$, the initial estimation of each agent for the neighbor satisfies $[\hat{x}_{1}(0),\hat{x}_{2}(0), \hat{x}_{3}(0), \hat{x}_{4}(0),\hat{x}_{4}(0)]^T = [-4,-5,2,0,5]^T$, the given
threshold, and the state bound are  $ \mathcal{B}_{ij}=0,i,j=1,2,3,4,5$ and $ \mathcal{W} = 50$, respectively. The distribution of the noise is $N(0, 10)$. Then $f_B = f\left(\left|\mathcal{B}\right| + \mathcal{W}\right) = 0.293$. 

\begin{figure}[!thp]
	\centering
	\subfloat[The states]{\includegraphics[width=1.72in]{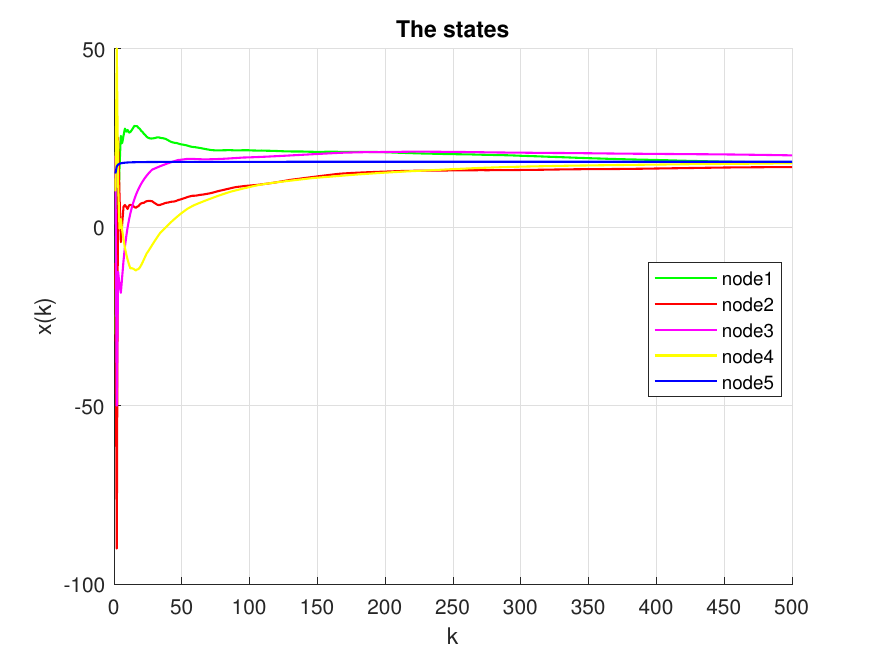}%
		\label{fig:statePlike}}
	\subfloat[The estimated value]{\includegraphics[width=1.72in]{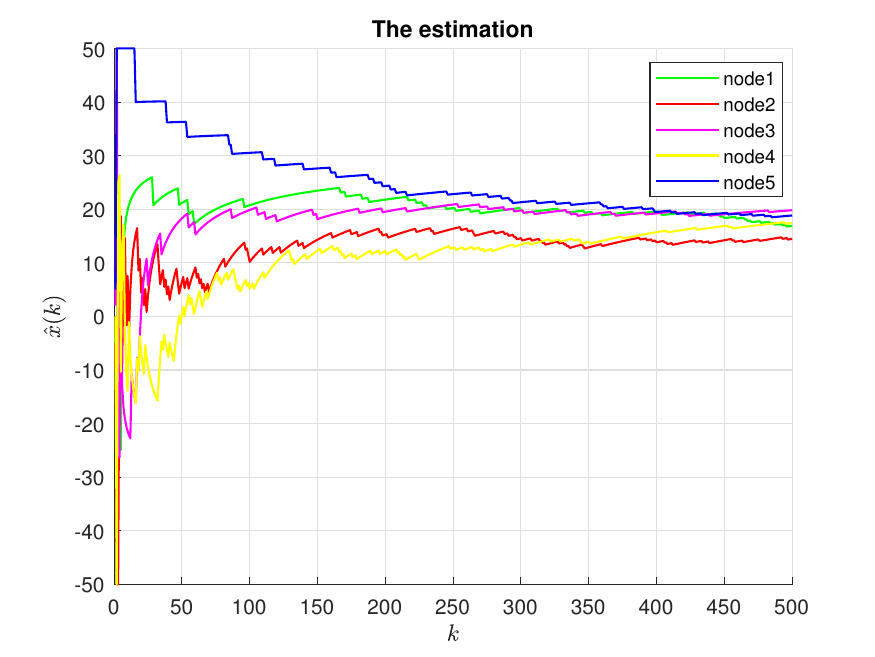}%
		\label{fig:estimatePlike}}
	\hfil
	\subfloat[The estimate error]{\includegraphics[width=1.72in]{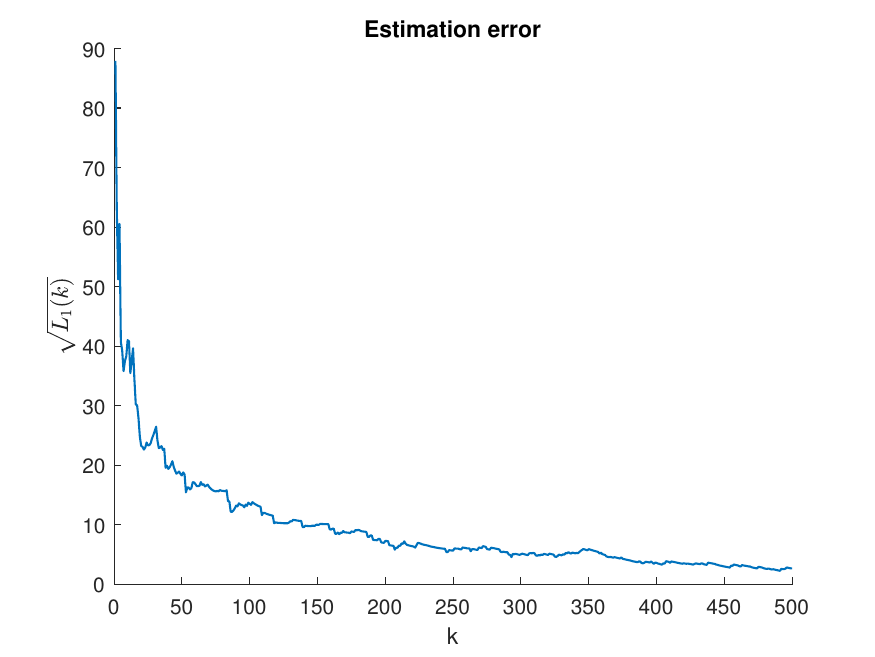}%
		\label{fig:estimateerror}}
	\subfloat[The tracking error]{\includegraphics[width=1.72in]{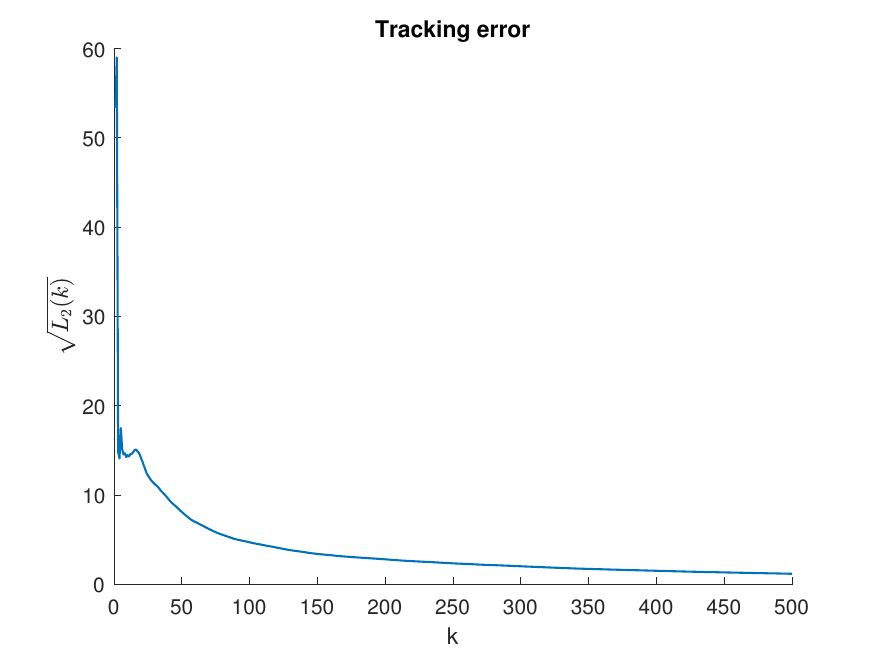}%
		\label{fig:trackerror}}
	\caption{Algorithm of CRS with $\beta =150$.}
	\label{fig_sim1}
\end{figure}
 \begin{figure}[!thp]
	\centering
	\subfloat[The states]{\includegraphics[width=1.72in]{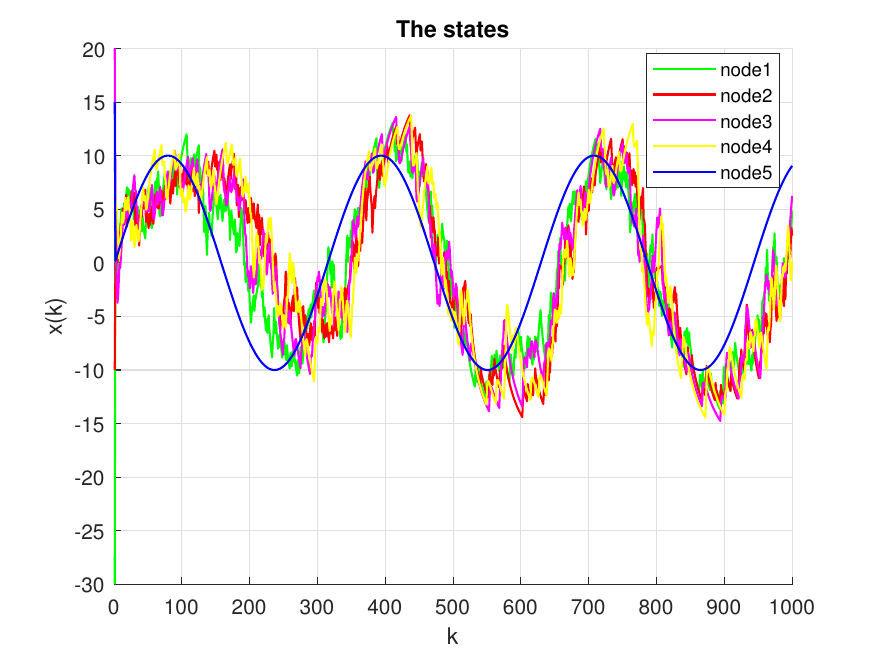}%
		\label{fig:statePlikecons}}
	\subfloat[The estimated value]{\includegraphics[width=1.72in]{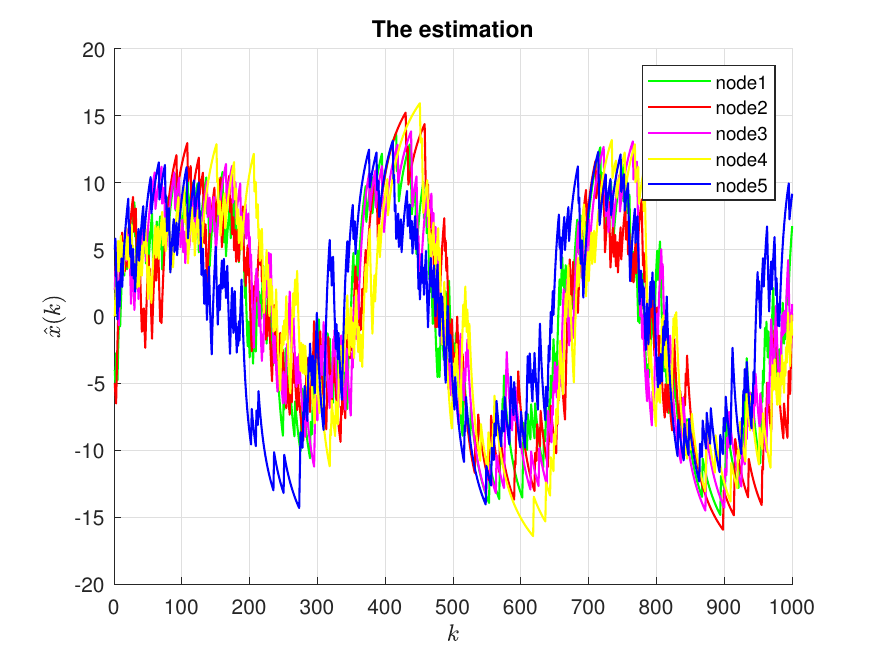}%
		\label{fig:estimatePlikescons}}
	\hfil
	\subfloat[The estimate error]{\includegraphics[width=1.72in]{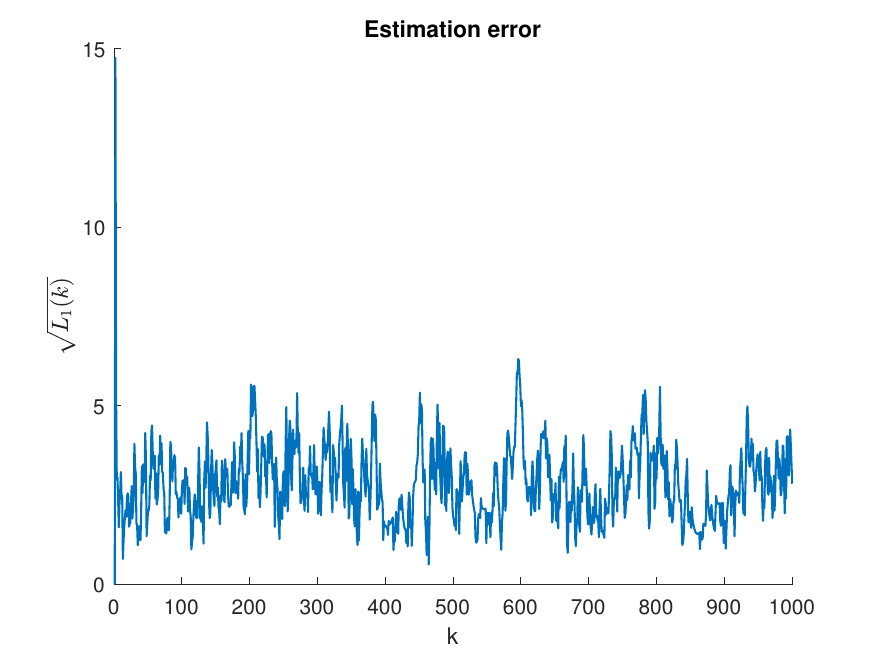}%
		\label{fig:estimateerrorPlikescons}}
	\subfloat[The tracking error]{\includegraphics[width=1.72in]{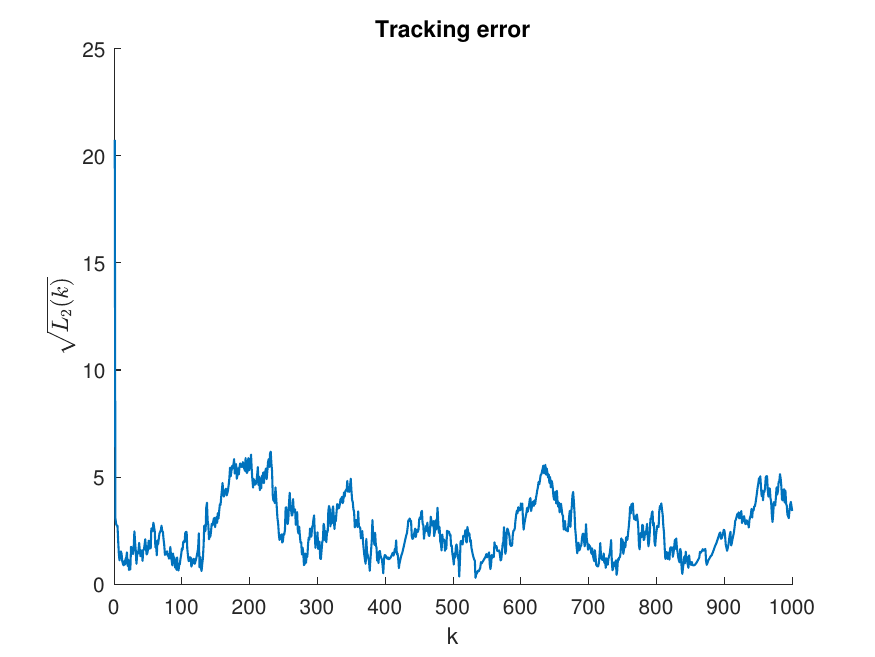}%
		\label{fig:trackerrorPlikescons}}
	\caption{ Algorithm of BRS with $\beta =3$.}
	\label{fig_sim3}
\end{figure}

\subsection{The reference state is asymptotically convergent}
Consider the reference state is asymptotically convergent with state updating as follows 
$$x_5(k+1)=x_5(k)+f(k),$$
where $f(k)=\frac{1}{k^2}$, which satisfies $\sum_{k=1}^{\infty}f(k)<\infty$. Choosing $\beta=150$ in the algorithm of CRS,  we can get the trajectories of the states and estimates of agents in Figure (\ref{fig:statePlike}) and (\ref{fig:estimatePlike}), respectively. The estimate error and the tracking error are given in (\ref{fig:estimateerror}) and (\ref{fig:trackerror}). From Figure \ref{fig_sim1}, we can see that the followers can converge to the state  of leader, which is consistent with Theorem \ref{the1}.

\subsection{The reference state is bounded}
 
 Consider the leader $x_5(k)=10sin(0.02k)$. By  algorithm of BRS, the trajectories of states, estimates, estimate and tracking errors are given in  Figure \ref{fig_sim3}, which shows the followers can track the leader in a bounded area. That is consistent with Theorem \ref{the3}.

%
%
%

\section{Conclusions}\label{sec:con}
This paper consider  consensus tracking problems with a time-varying reference state and binary-valued communication. For the asymptotically convergent reference state, the algorithm with decaying step sizes and control gains is given. It is proved that the followers can converge to the leader and the convergence rate depends on the change rate of the  reference state. For the bounded  reference state, an algorithm with constant step sizes and control gains is given. It is proved that the followers can achieve to a bounded neighborhood of the leader with an exponential rate. In the future, we can consider some complex tasks, such as formation control, obstacle avoidance and path planning with  limited communication and switching topology.


%

\appendices

\section{Proof of Lemma  \ref{lem3}}\label{ap:le3}
By the state updating \eqref{update3}, the tracking error $L_1(k)$ satisfies
\begin{align*}
	&L_1(k)\notag\\
	=&E \left[\Bigg( \left(I-\frac{\tilde{\mathcal{L}}}{k}\right)\eta(k-1)+\frac{\phi(I-J_{n+1}) M}{k}\theta(k-1)\right.\notag\\
	&\quad+\phi(I-J_{n+1})\Gamma(k-1)\Bigg)^TH \notag\\ 
	&\quad \Bigg( \left(I-\frac{\tilde{\mathcal{L}}}{k}\right)\eta(k-1)+\frac{\phi(I-J_{n+1}) M}{k}\theta(k-1)\notag\\
	&\left.\quad+\phi(I-J_{n+1})\Gamma(k-1)\Bigg)\right].\\
	\end{align*}
It can be written as
\begin{align}
	&L_1(k)\notag\\
	=&E\bigg(\eta(k-1)^T\left(I-\frac{\tilde{\mathcal{L}}}{k}\right)^TH\left(I-\frac{\tilde{\mathcal{L}}}{k}\right)\eta(k-1)\bigg)\notag\\
	&
	+\frac{1}{k^2}E\left(\theta(k-1)^TM^TAM\theta(k-1)\right)\notag\\
	&+E\left(\Gamma(k-1)^TA\Gamma(k-1)\right)\notag\\
	&+\frac{2}{k}E\Bigg(\theta(k-1)^TM^T(I-J_{n+1})\phi^TH\left(I-\frac{\tilde{\mathcal{L}}}{k}\right)\eta(k-1)\Bigg)\notag\\
	&+\frac{2}{k}E\left(\theta(k-1)^TM^TA\Gamma(k-1)\right)\notag\\
	&+2E\Bigg(\Gamma(k-1)^T(I-J_{n+1})\phi^TH\left(I-\frac{\tilde{\mathcal{L}}}{k}\right)\eta(k-1)\Bigg),
	\label{eq:L1}
\end{align}
where matric $A=(I-J_{n+1})^T\phi^TH\phi(I-J_{n+1})$, vector $\Gamma(k)=\left[0, 0, \ldots, 0, f(k)\right]^T$.

Using the properties of positive definite matrices, we can get 
\begin{align}
	&E\left(\eta(k-1)^T\left(I-\frac{\tilde{L}}{k}\right)^TH\left(I-\frac{\tilde{\mathcal{L}}}{k}\right)\eta(k-1)\right)\notag\\
	&=E\left(\eta(k-1)^T\left(H-\frac{\left(H\tilde{\mathcal{L}}+\tilde{\mathcal{L}}^TH\right)}{k}+\frac{\tilde{\mathcal{L}}^TH\tilde{\mathcal{L}}}{k^2}\right)\right.\notag\\
	&\quad \quad\left.\eta(k-1)\right)\notag\\
	&=E\left(\eta(k-1)^T\left(H-\frac{ I}{k}+\frac{\tilde{\mathcal{L}}^TH\tilde{\mathcal{L}}}{k^2}\right)\eta(k-1)\right)\notag\\
	&=E\left(\eta(k-1)^T\left(H-\frac{ I}{2k}\right)\eta(k-1)\right),
	\label{eq:neta}
\end{align}
if $k>2\lambda_{max}(\tilde{\mathcal{L}}^TH\tilde{\mathcal{L}})$.
Together with $$E\left(\eta(k-1)^T\eta(k-1)\right)\geq \frac{E\left(\eta(k-1)^TH\eta(k-1)\right)}{\lambda_{max}(H)},$$ we can get that
\begin{align}
&E\left(\eta(k-1)^T\left(I-\frac{\tilde{L}}{k}\right)^TH\left(I-\frac{\tilde{\mathcal{L}}}{k}\right)\eta(k-1)\right)\notag\\
&\le \left(1-\frac{1}{2\lambda_{max}(H)k}\right)E\left(\eta(k-1)^TH\eta(k-1)\right)\notag\\
&=\left(1-\frac{c_1}{k}\right)L_1(k-1),\;c_1=\frac{1}{2\lambda_{max}(H)}.
\label{pr:L1}
\end{align}
Besides, we have
\begin{align*}
	 \frac{1}{k^2}E\left(\theta(k-1)^TM^TAM\theta(k-1)\right)\le
	 O\left(\frac{1}{k^2}\right),
\end{align*}
and
\begin{align*}
	&E\left(\Gamma(k-1)^TA\Gamma(k-1)\right) \le \lambda_\phi f^2(k)=\lambda_\phi f^2(k-1),
\end{align*}
where $\lambda_\phi =\lambda_\phi =\lambda_{max}((I-J_{n+1})^T\phi^TH\phi(I-J_{n+1}))$.
By holder inequality, we have 
\begin{align*}
	&E\Bigg(\theta(k-1)^TM^T(I-J_{n+1})\phi^TH\left(I-\frac{\tilde{\mathcal{L}}}{k}\right)\eta(k-1)\Bigg)\\
	\leq & \sqrt{\alpha E\left(\eta(k-1)^T\left(I-\frac{\tilde{L}}{k}\right)^TH\left(I-\frac{\tilde{\mathcal{L}}}{k}\right)\eta(k-1)\right)}\\
	&\sqrt{\frac{1}{\alpha}E\left(\theta(k-1)^TM^TAM\theta(k-1)\right)}\\
	\leq& \frac{1}{2}\left( \alpha L_1(k)+\frac{\lambda_\phi d_*}{\alpha}L_2(k)\right),
\end{align*}
where $\alpha>0$. Let $\alpha=c_1/2$, we have
\begin{align*} 
&\frac{2}{k}E\Bigg(\theta(k-1)^TM^T(I-J_{n+1})\phi^TH\left(I-\frac{\tilde{\mathcal{L}}}{k}\right)\eta(k-1)\Bigg)\\
\leq &\frac{c_1}{2k}L_1(k)+\frac{2\lambda_\phi d_*}{c_1k}L_2(k).
\end{align*}
Since the estimates and the states of agents are bounded by Remark \ref{rem:estimate} and Proposition \ref{rem:pro1}, it follows that
$$ \frac{2}{k}E\left(\theta(k-1)^TM^TA\Gamma(k-1)\right)=O\left(\frac{1}{k}f(k-1)\right),$$
and 
$$E\Bigg(\Gamma(k-1)^T(I-J_{n+1})\phi^TH\left(I-\frac{\tilde{\mathcal{L}}}{k}\right)\eta(k-1)\Bigg)=O(f(k-1)). $$
By inequaltiy \eqref{eq:L1}, we have that
\begin{align*}
	L_1(k)\le& \left(1-\frac{c_1}{2k}\right)L_2(k-1)+\frac{2\lambda_\phi d_*}{c_1k}L_2(k-1)\notag\\&+O(f(k-1))+O\left(\frac{1}{k^2}\right),\; k>T_1, 
\end{align*}
where $T_1=2\lambda_{max}(\tilde{\mathcal{L}}^TH\tilde{\mathcal{L}})$.

\section{Proof of Lemma \ref{lem2}}\label{ap:le2}
\vspace*{12pt}

According to the definition of the estimated error, we can get
\begin{align*}
	L_2(k)=E\left(\theta(k)^T\theta(k)\right) = \sum_{i=1}^{n}\sum_{j \in N_i}E\left(\hat{x}_{ij}(k)-x_j(k)\right)^2.
\end{align*}

By the estimation \eqref{up:estimate}, the state updating \eqref{agentupdate1}, we can get
\begin{align}
	&E\theta_{ij}^2(k) = E\left(\hat{x}_{ij}(k)-x_j(k)\right)^2 \notag\\
	&\leq E \left(\hat{x}_{ij}(k-1)+\frac{\beta}{k}(\mathcal{F}\left(\mathcal{B}_{ij}-\hat{x}_{ij}(k-1)\right)-S_{ij}(k))\right.\notag\\
	&\quad \left.-x_j(k) \right)^2 \notag\\
	&= E\Bigg( \hat{x}_{ij}(k-1)-x_j(k-1)+\frac{\beta}{k}\Big(\mathcal{F}\left(\mathcal{B}_{ij}-\hat{x}_{ij}(k-1)\right)\notag\\&\quad -S_{ij}(k)\Big)+\frac{1}{k}\sum_{p=1}^{n+1}a_{jp}  (x_j(k-1)-\hat{x}_{jp}(k-1)) \Bigg)^2 \notag\\
	&= E\Bigg( \theta_{ij}(k-1)+\frac{\beta}{k}\big(\mathcal{F}\left(\mathcal{B}_{ij}-\hat{x}_{ij}(k-1)\right)-S_{ij}(k)\big) \notag\\&\quad\quad\quad +\frac{1}{k}\sum_{p=1}^{n+1}a_{jp}(x_j(k-1)-\hat{x}_{jp}(k-1)) \Bigg)^2, \notag\\
	&\quad\quad\text{if}\; j=1, \ldots, n,
	\label{up:theta}
\end{align}
and 
\begin{align}
	E\theta_{ij}^2(k)\leq &E\Bigg( \theta_{ij}(k-1)+\frac{\beta}{k}\big(\mathcal{F}\left(\mathcal{B}_{ij}-\hat{x}_{ij}(k-1)\right)-S_{ij}(k)\big) \notag\\&\quad\quad\quad -f(k-1) \Bigg)^2, \quad\text{if}\; j= n+1.
	\label{up:theta2}
\end{align}
Let \begin{align*}
	\alpha(k)= &\theta(k)+\frac{\beta}{k+1}(\hat{F}(k)-S(k+1))\\
 	&+W\left( \frac{1}{k+1}\mathcal{L}x(k)-\frac{1}{k+1}M\theta(k)-\Gamma(k) \right),
\end{align*}
 where $W$ is defined in \eqref{def:W}.

 By \eqref{up:theta} and \eqref{up:theta2}, we can obtain
\begin{align}
	L_2(k) &\le E\left(\alpha(k-1)^T\alpha(k-1)\right)\notag\\
	&=E\left(\theta^T(k-1)\theta(k-1) \right)\notag\\
	&\quad+\frac{2\beta}{k}E\left(\theta^T(k-1)\left(\hat{F}(k-1)-S(k)\right)\right)\notag\\
	&\quad+\frac{2\theta^T(k-1) W}{k}E\left(\mathcal{L}x(k-1)-M\theta(k-1) \right)\notag\\
	&\quad+O\left(\frac{1}{k^2}\right)+O\left(f(k) \right)+O\left(\frac{f(k)}{k}\right).
	\label{up:L2}
\end{align}
By mean value principle, we have
\begin{align*}
&E\big(\mathcal{F}\left(\mathcal{B}_{ij}-\hat{x}_{ij}(k-1)\right)-S_{ij}(k)\big)\\
=&E\big(\mathcal{F}\left(\mathcal{B}_{ij}-\hat{x}_{ij}(k-1)\right)-\mathcal{F}\left(\mathcal{B}_{ij}-x_j(k)\right)\big)\\
=&-f(\zeta_{ij}(k))(\hat{x}_{ij}(k-1)-x_j(k)),
\end{align*}
where $\zeta_{ij}(k)$ is the middle of $\hat{x}_{ij}(k-1)$ and $x_j(k)$, which is bounded. Hence, we can get that
\begin{align*} 
&\frac{2\beta }{k}E\left[\theta (k-1)^T(\hat{F}(k-1)-S(k))\right] \\
 &\leq -\frac{2\beta f_B}{k}L_2(k-1)+O\left(\frac{f(k)}{k}\right)+O\left(\frac{1}{k^2}\right).  
 \end{align*}
where $\beta$ is a step size for estimation, where  $f_B =f(B + \mathcal{W}), B = \max_{j\in N_i, i = 1, \ldots, n}|\mathcal{B}_{ij} |$, $ \mathcal{W}$ is the bound of the projection of estimation. 

Since 
\begin{align*} 
	&\frac{2}{k}E\left[\theta (k-1)^TW\mathcal{L}x(k-1)\right]\\
	 \leq &\frac{2}{k}\sqrt{E[\theta (k-1)^TW{\mathcal{L}}\mathcal{L}^TW^T\theta (k-1)]} \\ 
	&\quad\sqrt{E[x(k-1)^T x(k-1)]} \\ 
	\leq &\frac{2}{k}\sqrt{E[\theta (k-1)^TW{\mathcal{L}}\mathcal{L}^TW^T\theta (k-1)]} \\
	&\quad \sqrt{E[\xi(k-1)^T(I-J_{n+1})^{-1}(I-J_{n+1})^{-1}\xi(k-1)]} \\
	 \leq &\frac{2}{k}\sqrt{\frac{\lambda _{W}\lambda_L}{\alpha }L_2(k-1)\alpha \lambda _{J}E(\eta^T(k-1)\eta(k-1)) } \\ 
	 	 \quad\leq &\frac{2}{k}\sqrt{\frac{\lambda _{W}\lambda_L}{\alpha }L_2(k-1)   \frac{\alpha\lambda _{J}}{\lambda_{\min}(H)}L_1(k-1) } \\
	  \leq &\frac{1}{k}\left(\frac{\lambda _{W}\lambda_L}{\alpha }L_2(k-1)+\alpha h L_1(k-1)\right)
\end{align*}
where $\lambda _{W}=\lambda_{\max}(WW^T)$, $\lambda_L=\lambda_{\max}(\mathcal{L}\mathcal{L}^T)$, $\lambda _{J}=\lambda_{\max}((I-J_{n+1})^{-1}(I-J_{n+1})^{-1})=1 $, and $h=\frac{1}{\lambda_{\min}(H)}$. 
Also,
\begin{align*}
	&-\frac{2}{k}E\left[\theta (k-1)^T WM\theta (k-1)\right]  \leq \frac{2d_*}{k}L_2(k-1)
\end{align*}
where $d_*$ is the maximum degree of the nodes in the system network. By \eqref{up:L2}, we can get the lemma.

\section{Proof of Lemma \ref{lem:BLCG1}}\label{ap:BLCG1}
\vspace*{12pt}

By \eqref{update:eta} and the definition of $L_3(k)$, it follows
\begin{align*}
	&L_3(k)=E\left(\eta(k)^TH\eta(k)\right)\\
	&E \Bigg( \left(I-\frac{\tilde{\mathcal{L}}}{N}\right)\eta(k-1)+\frac{\phi(I-J_{n+1}) M}{N}\theta(k-1)\\
	&+\phi(I-J_{n+1})\Gamma(k-1)\Bigg)^TH \\ 
	&\Bigg( \left(I-\frac{\tilde{\mathcal{L}}}{N}\right)\eta(k-1)+\frac{\phi(I-J_{n+1}) M}{N}\theta(k-1)\\
	&+\phi(I-J_{n+1})\Gamma(k-1)\Bigg)\\
	&\leq 3E\bigg(\eta(k-1)^T\left(I-\frac{\tilde{\mathcal{L}}}{N}\right)^T\phi^TH\phi\left(I-\frac{\tilde{\mathcal{L}}}{N}\right)\\
	&\quad\quad\eta(k-1)\bigg)
	+\frac{3}{N^2}E\left(\theta(k-1)^TM^TAM\theta(k-1)\right)\\
	&+3E\left(\Gamma(k-1)^TA\Gamma(k-1)\right)
\end{align*}
where $A=(I-J_{n+1})^T\phi^TH\phi(I-J_{n+1})$, $\Gamma(k)=\left[0, 0, \ldots, 0, f(k)\right]^T$.

Similar to \eqref{eq:neta} and \eqref{pr:L1} in Appendix \ref{ap:le3},  we can get 
\begin{align*}
	&E\left(\eta(k-1)^T\left(I-\frac{\tilde{L}}{N}\right)^TH\left(I-\frac{\tilde{\mathcal{L}}}{N}\right)\eta(k-1)\right)\\
	&\leq \left(1-\frac{1}{2N\lambda_H}\right)L_1(k-1),\; as \; N>2\lambda_H\lambda_L,
\end{align*}
where $\lambda_H=\lambda_{\max}(H)$,  $\lambda_L=\lambda_{\max}(\mathcal{L}\mathcal{L}^T)$.
Together with
\begin{align*}
	\frac{1}{N^2}E\left(\theta(k-1)^TM^TAM\theta(k-1)\right)\le
	\frac{\lambda_\phi d_*}{N^2}L_2(k-1),
\end{align*}
and
\begin{align*}
	&E\left(\Gamma(k-1)^TA\Gamma(k-1)\right) \le \lambda_\phi  \epsilon^2,
\end{align*}
where $\lambda_\phi=\lambda_{max}(A)$, we can get the lemma.

\section{Proof of Lemma \ref{lem:BLCG2}}\label{ap:BLCG2}
By \eqref{eq:esbN}, we have
\begin{align}
&E\theta_{ij}^2(k) = E\left(\hat{x}_{ij}(k)-x_j(k)\right)^2 \notag\\
&\leq E \left(\hat{x}_{ij}(k-1)+\beta(\mathcal{F}\left(\mathcal{B}_{ij}-\hat{x}_{ij}(k-1)\right)-S_{ij}(k))
-x_j(k) \right)^2 \notag\\
&= E(\hat{x}_{ij}(k-1)-x_j(k))^2\notag\\
&\quad+2\beta E\Big[(\mathcal{F}\left(\mathcal{B}_{ij}-\hat{x}_{ij}(k-1)\right)-S_{ij}(k))\notag\\
&\quad \quad\quad\quad\quad(\hat{x}_{ij}(k-1)-x_j(k)) \Big]\notag\\
&\quad +\beta^2 E(\mathcal{F}\left(\mathcal{B}_{ij}-\hat{x}_{ij}(k-1)\right)-S_{ij}(k))^2
\label{up:tha}
\end{align}
Since 
\begin{align*}
&E\Big[(\mathcal{F}\left(\mathcal{B}_{ij}-\hat{x}_{ij}(k-1)\right)-S_{ij}(k))(\hat{x}_{ij}(k-1)-x_j(k)) \Big]\\
=&E\Big[\big(\mathcal{F}\left(\mathcal{B}_{ij}-\hat{x}_{ij}(k-1)\right)-\mathcal{F}\left(\mathcal{B}_{ij}-x_j(k)\right)\big)\\
&\quad (\hat{x}_{ij}(k-1)-x_j(k)) \Big]\\
=&-E\Big[f(\zeta_{ij}(k))(\hat{x}_{ij}(k-1)-x_j(k))^2\Big],
\end{align*}
where $\zeta_{ij}(k)$ is the middle of $\hat{x}_{ij}(k-1)$ and $x_j(k)$, 
and 
\begin{align*}
	&E(\mathcal{F}\left(\mathcal{B}_{ij}-\hat{x}_{ij}(k-1)\right)-S_{ij}(k))^2\\
	=&E\left(\mathcal{F}\left(\mathcal{B}_{ij}-\hat{x}_{ij}(k-1)\right)-\mathcal{F}\left(\mathcal{B}_{ij}-x_j(k)\right)\right.\\
	&\left.\quad+\mathcal{F}\left(\mathcal{B}_{ij}-x_j(k)\right)-S_{ij}(k)\right)^2\\	
	=&E\left(\mathcal{F}\left(\mathcal{B}_{ij}-\hat{x}_{ij}(k-1)\right)-\mathcal{F}\left(\mathcal{B}_{ij}-x_j(k)\right)\right)^2\\
	&+E\left(\mathcal{F}\left(\mathcal{B}_{ij}-x_j(k)\right)-S_{ij}(k)\right)^2\\
	=&E\Big[f^2(\zeta_{ij}(k))(\hat{x}_{ij}(k-1)-x_j(k))^2\Big]\\
	&+E\left(\mathcal{F}\left(\mathcal{B}_{ij}-x_j(k)\right)(1-\mathcal{F}\left(\mathcal{B}_{ij}-x_j(k)\right))\right)\\
	\leq &E\Big[f^2(\zeta_{ij}(k))(\hat{x}_{ij}(k-1)-x_j(k))^2\Big]+\frac{1}{4},
\end{align*}
it follows that
\begin{align*}
E\theta_{ij}^2(k)\leq &E(1-\beta f(\zeta_{ij}(k)))^2(\hat{x}_{ij}(k-1)-x_j(k))^2+\frac{\beta^2}{4}\\
\leq &\left(1-\frac{\beta f_B}{2}\right)E(\hat{x}_{ij}(k-1)-x_j(k))^2+\frac{\beta^2}{4},
\end{align*}
if $\beta\leq \frac{1}{2f(0)}$, where  $f_B =f(B + \mathcal{W}), B = \max_{j\in N_i, i = 1, \ldots, n}|\mathcal{B}_{ij} |$, $ \mathcal{W}$ is the bound of the projection of estimation. 
According to the definition of the estimated error, we can get
\begin{align}
&L_2(k)\notag\\
&=\sum_{i=1}^{n}\sum_{j \in N_i}E\left(\hat{x}_{ij}(k)-x_j(k)\right)^2\notag\\
&\leq \left(1-\frac{\beta f_B}{2}\right)\sum_{i=1}^{n}\sum_{j \in N_i}E(\hat{x}_{ij}(k-1)-x_j(k))^2+\frac{nd_*\beta^2}{4}.
\label{eq:L22}
\end{align}

Let \begin{align*}
\gamma(k)= \theta(k)+W\left( \frac{1}{N}\mathcal{L}x(k)-\frac{1}{N}M\theta(k)-\Gamma(k) \right),
\end{align*}
we have
\begin{align*}
	&E(\gamma^T(k)\gamma(k))\\
	=&E\left(\theta(k)+W\left( \frac{1}{N}\mathcal{L}x(k)-\frac{M}{N}\theta(k)-\Gamma(k) \right) \right)^T\\
	&\quad \left(\theta(k)+W\left( \frac{1}{N}\mathcal{L}x(k)-\frac{M}{N}\theta(k)-\Gamma(k) \right) \right)\\
	\leq& 3E\left(\theta(k)^T\left(I-\frac{WM}{N}\right)^T\left(I-\frac{WM}{N}\right)\theta(k)\right)\\
	&+\frac{3}{N^2}E\left(x^T(k)\mathcal{L}^TW^TW\mathcal{L}x(k)\right)\\
	&+3E\left(\Gamma(k)^TW^TW\Gamma(k)\right)\\
	\leq& 3\lambda_ML_2(k)+\frac{3\lambda_{W}\lambda_L h}{N^2}L_1(k)+3\lambda_{W} \epsilon^2,
\end{align*}
where $\lambda_M=\lambda_{\max}\left(\left(I-\frac{WM}{N}\right)^T\left(I-\frac{WM}{N}\right)\right)$,  $\lambda _{W}=\lambda_{\max}(WW^T)$, $\lambda_L=\lambda_{\max}(\mathcal{L}\mathcal{L}^T)$, and $h=\frac{1}{\lambda_{\min}(H)}$. 
Since
\begin{align*}
&\sum_{i=1}^{n}\sum_{j \in N_i}E(\hat{x}_{ij}(k-1)-x_j(k))^2\\
=&E(\gamma^T(k-1)\gamma(k-1))\\
\leq &3\lambda_ML_2(k-1)+\frac{3\lambda_{W}\lambda_L h}{N^2}L_1(k-1)+3\lambda_{W} \epsilon^2,
\end{align*}
we have by \eqref{eq:L22}
\begin{align*}
L_2(k)\leq &3\left(1-\frac{\beta f_B}{2}\right)\lambda_ML_2(k-1)\\
&+\frac{3\left(1-\frac{\beta f_B}{2}\right)\lambda_{W}\lambda_L h}{N^2}L_1(k-1)\\
&+3\left(1-\frac{\beta f_B}{2}\right)\lambda_{W} \epsilon^2+\frac{nd_*\beta^2}{4}
\end{align*}


\section{Biography Section}

\vspace{11pt}

\vspace{-33pt}
\begin{IEEEbiography}[{\includegraphics[width=1in,height=1.25in,clip,keepaspectratio]{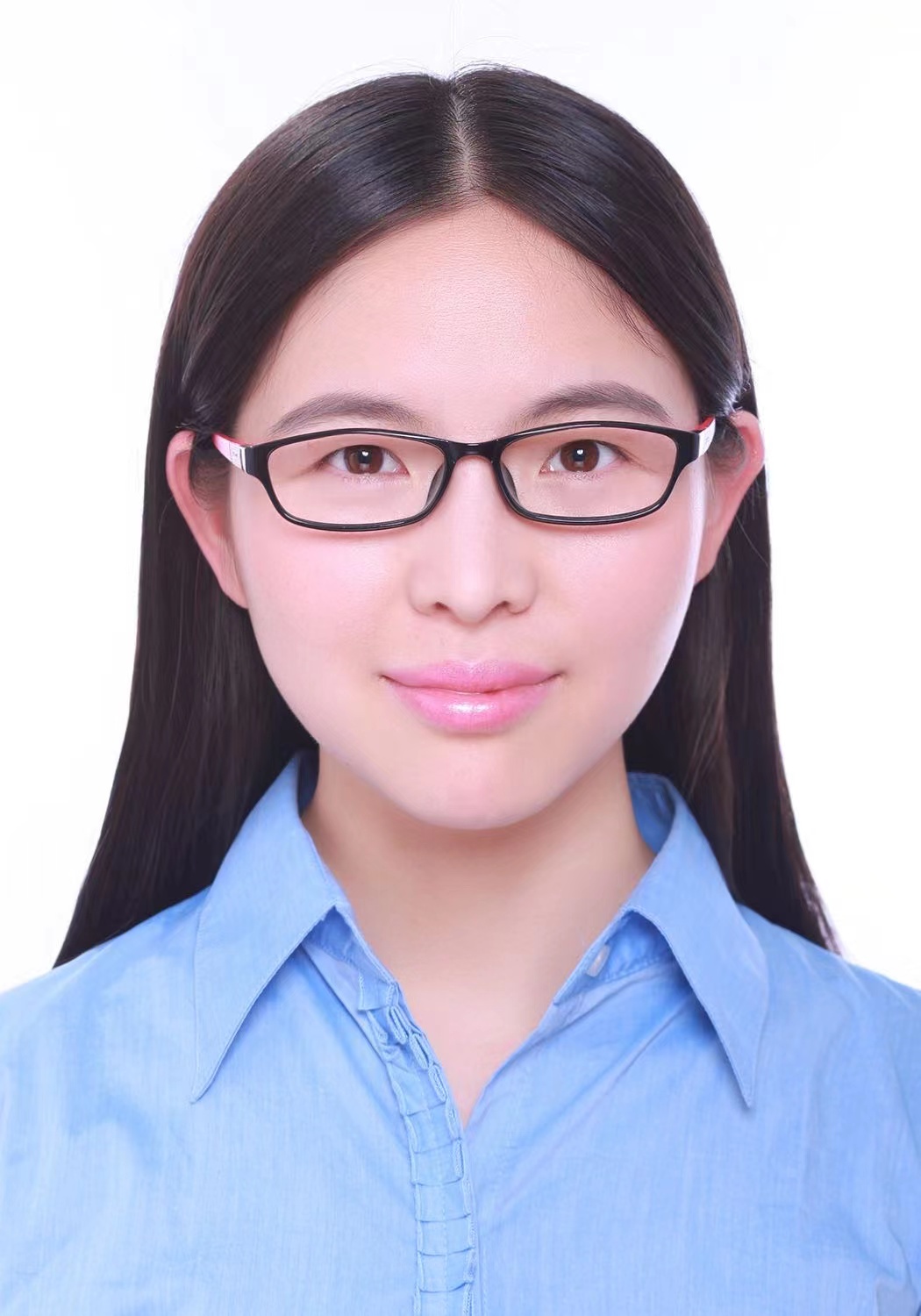}}]{Ting Wang} received the B.S. degree in mathematics and applied mathematics from Xidian University, Xi’an, China, in 2012, and the Ph.D. degree in operational research and cybernetics from the Academy of Mathematics and Systems Science, Chinese Academy of Sciences, Beijing, China, in 2017. 

She is currently an Associate Professor with the School of Intelligence Science and Technology, University of Science and Technology Beijing. Her current research interests include identification and control of quantized systems, and distributed control of multi-agent systems.
\end{IEEEbiography}
\begin{IEEEbiography}[{\includegraphics[width=1in,height=1.25in,clip,keepaspectratio]{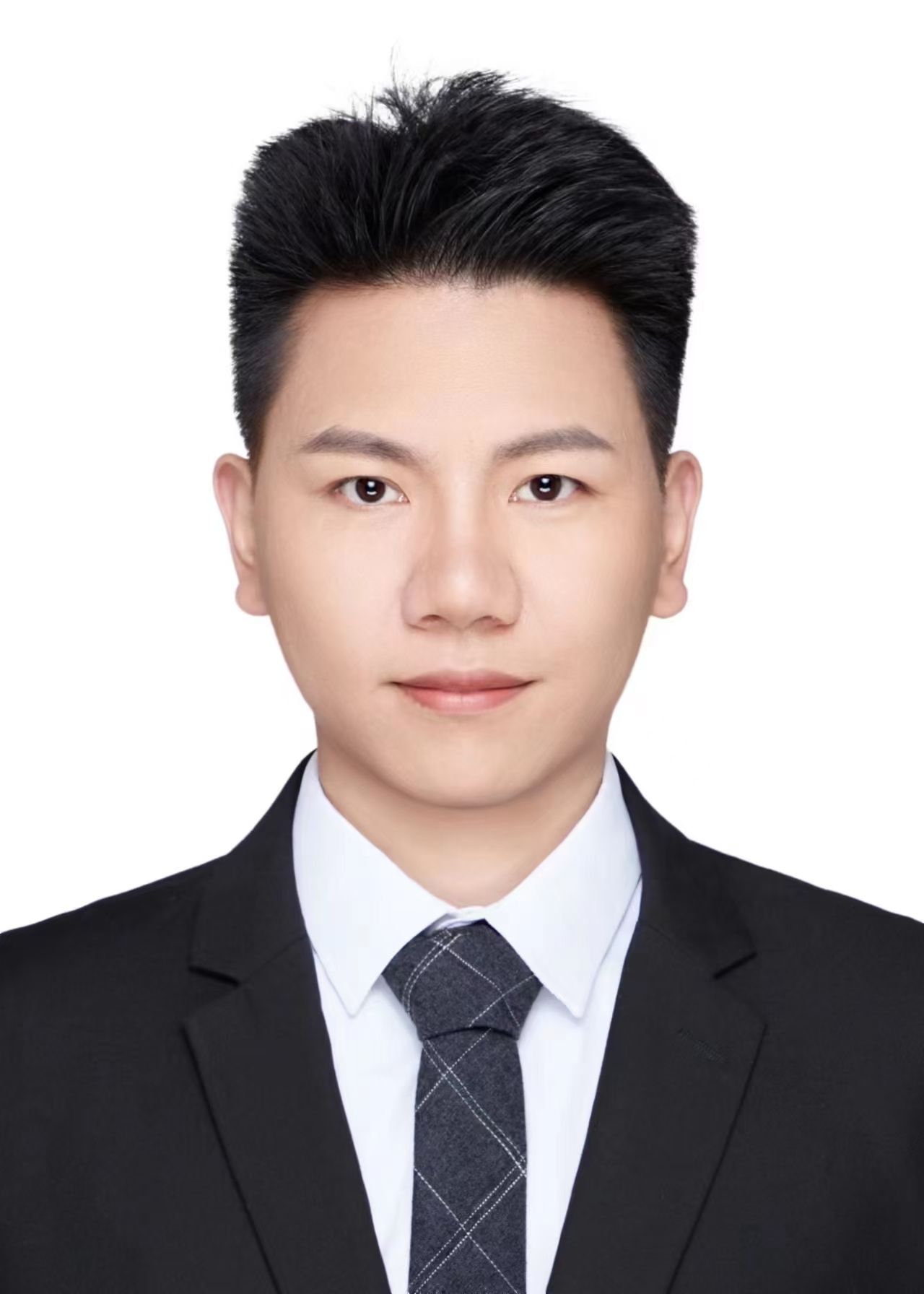}}]{Qiu Zhuangzhuang} graduated from Tianjin University of Technology majoring in automation in 2019. Now he is a postgraduate student majoring in Control Science and Engineering at University of Science and Technology Beijing. His research interest is distributed consensus tracking control of multi-agent systems.
\end{IEEEbiography}

\begin{IEEEbiography}[{\includegraphics[width=1in,height=1.25in,clip,keepaspectratio]{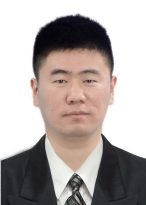}}]{Xiaodong Lu} received the Ph. D. degree from the School of Control Science and Engineering, Shandong University in 2019. From 2022 to 2024, he was a Postdoctoral Researcher with the Institute of Systems Science, Chinese Academy of Sciences, China. He is currently an Associate Professor with the School of Automation and Electrical Engineering, University of Science and Technology Beijing. His research interests include multi-agent systems and stochastic systems.
\end{IEEEbiography}

\begin{IEEEbiography}[{\includegraphics[width=1in,height=1.25in,clip,keepaspectratio]{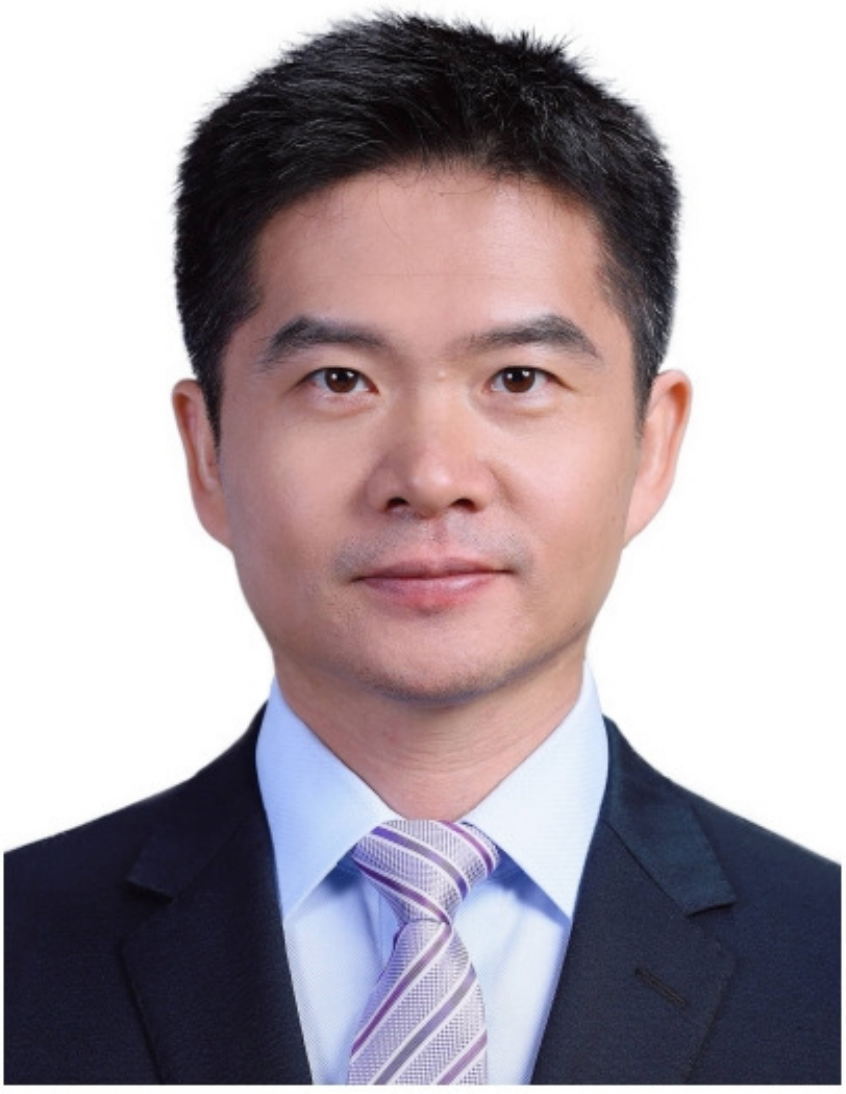}}]{Yanlong Zhao} received the B.S. degree in mathematics from Shandong University, Jinan, China, in 2002, and the Ph.D. degree in systems theory from the Academy of Mathematics and Systems Science (AMSS), Chinese Academy of Sciences (CAS), Beijing, China, in 2007. Since 2007, he has been with the AMSS, CAS, where he is currently a full Professor. His research interests include identification and control of quantized systems, information theory and modeling of financial systems. He has been a Deputy Editor-in-Chief of \emph{Journal of Systems and Science and Complexity}, an Associate Editor of \emph{Automatica}, \emph{SIAM Journal on Control and Optimization}, and \emph{IEEE Transactions on Systems, Man and Cybernetics: Systems}. He served as a Vice-President of Asian Control Association and a Vice-President of IEEE CSS Beijing Chapter, and is now a Vice General Secretary of Chinese Association of Automation (CAA) and the Chair of Technical Committee on Control Theory (TCCT), CAA.
\end{IEEEbiography}



\vfill

\end{document}